\documentclass[journal]{IEEEtran}
\IEEEoverridecommandlockouts

\usepackage{amsmath}
\usepackage[linesnumbered, ruled]{algorithm2e}
\usepackage{amsfonts}
\usepackage{amsthm}
\usepackage{algorithmicx}
\usepackage{array}
\usepackage[caption=false]{subfig}
\usepackage{textcomp}
\usepackage{stfloats}
\usepackage{url}
\usepackage{verbatim}
\usepackage{graphicx}  
\usepackage{float}  
\usepackage{subfloat}  

\usepackage{cite}
\usepackage{multicol}

\usepackage{bm}
\usepackage{booktabs}
\usepackage{amssymb}
\usepackage{comment}
\usepackage[colorlinks=true, allcolors=blue]{hyperref}
\usepackage{mathrsfs}
\usepackage{tikz}
\usepackage{pifont}
\newtheorem{definition}{Definition} 

\newtheorem{lemma}{Lemma}
\newtheorem{theorem}{Theorem}
\hyphenation{op-tical net-works semi-conduc-tor IEEE-Xplore}
\columnsep 0.200in

\begin{document}

\title{Achievable Sum Rate Optimization on NOMA-aided Cell-Free Massive MIMO with Finite Blocklength Coding
	{\footnotesize \textsuperscript{}}
}

\author{Baolin Chong, Hancheng Lu,~\IEEEmembership{Senior Member,~IEEE}, Yuang Chen, Langtian Qin and Fengqian Guo
\IEEEcompsocitemizethanks{\IEEEcompsocthanksitem Baolin Chong, Hancheng Lu, Yuang Chen, Langtian Qin and Fengqian Guo are with the Department of Electronic Engineering and Information Science, University of Science and Technology of China, Hefei 230027, China. (e-mail: chongbaolin@mail.ustc.edu.cn; hclu@ustc.edu.cn; yuangchen21@mail.ustc.edu.cn; qlt315@mail@mail.ustc.edu.cn; fqguo@ustc.edu.cn) \protect\\}}

\maketitle

\begin{abstract}
Non-orthogonal multiple access (NOMA)-aided cell-free massive multiple-input multiple-output (CFmMIMO) has been considered a promising technology to fulfill the strict quality of service requirements for ultra-reliable low-latency communications (URLLC). However, finite blocklength coding (FBC) in URLLC makes it challenging to achieve optimal performance in the NOMA-aided CFmMIMO system.
In this paper, we investigate the performance of the NOMA-aided CFmMIMO system with FBC in terms of achievable sum rate (ASR). Firstly, we derive a lower bound on the ergodic data rate. Then, we formulate an ASR maximization problem by jointly considering power allocation and user equipment (UE) clustering. To tackle such an intractable problem, we decompose it into two sub-problems, i.e., the power allocation problem and the UE clustering problem.
A successive convex approximation algorithm is proposed to solve the power allocation problem by transforming it into a series of geometric programming problems. Meanwhile, two algorithms based on graph theory are proposed to solve the UE clustering problem by identifying negative loops. Finally, alternative optimization is performed to find the maximum ASR of the NOMA-aided CFmMIMO system with FBC. The simulation results demonstrate that the proposed algorithms significantly outperform the benchmark algorithms in terms of ASR under various scenarios.
\end{abstract}

\begin{IEEEkeywords}
  Cell-free massive multiple-input multiple-output (CFmMIMO), non-orthogonal multiple access (NOMA), ultra-reliable low-latency communication (URLLC), finite blocklength coding (FBC), graph theory.
\end{IEEEkeywords}

\section{Introduction}
\IEEEPARstart{U}{ltra}-reliable low-latency communications (URLLC) is an emerging communication technology that has garnered significant research attention due to its vast potential in supporting applications such as virtual reality and augmented reality, haptic internet, vehicle networks, and autonomous systems \cite{Bennis2018,FengUltra2021}. Unlike other communication technologies, URLLC imposes stringent quality of service (QoS) requirements in terms of latency and reliability on the communication process \cite{Nava2020A,Popovski5G2018,HassanKey2021}.
For example, URLLC communication services typically require a target reliability of $1 – 10^{-5}$ within 1 ms user plane latency, according to Third Generation Partnership Project (3GPP) standards \cite{WON2021221}. Hence, URLLC brings huge pressure on wireless communication systems.

Massive multiple-input multiple-output (mMIMO) system has been considered capable of supporting URLLC due to its appealing number of spatial degrees of freedom and channel hardening features. In \cite{9416241}, the authors analyzed the data rate of users in a mMIMO-aided URLLC system with imperfect channel state information (CSI) and pilot contamination.
The design of mMIMO-aided URLLC systems with weighted sum rate and energy efficiency (EE) as the goals are given in \cite{RenJoint2020} and \cite{9516890}, respectively. However, edge users often experience severe path loss in mMIMO systems, which makes it difficult for them to support URLLC. In addition, interference from neighboring cells can cause significant performance degradation for edge users.

To address these issues, cell-free mMIMO (CFmMIMO) systems are employed to effectively support URLLC.
In CFmMIMO systems, numerous distributed access points (APs) jointly serve all user equipments (UEs) through central processing unit (CPU) control on the same time-frequency resource \cite{9586055}.
Unlike mMIMO systems where APs only serve users within their respective cells, CFmMIMO systems eliminate traditional cell boundaries, thereby avoiding inter-cell interference and providing nearly uniform service for UEs.
{\color{blue}
While coordinated multi-point (CoMP) systems can also suppress inter-cell interference, they rely on cooperation between neighboring cells, and interference between non-cooperating cells still exists \cite{electronics12041001, 8761828}. Besides, joint transmission between base stations relies on the interaction of CSI in CoMP systems, limiting its scalable deployment. In CFmMIMO systems, each AP only requires local CSI for precoding \cite{9650567, 8768014}.}
The dense deployment of APs in CFmMIMO systems reduces the distance between APs and UEs, resulting in significant improvements in spectral efficiency (SE) to support URLLC. Furthermore, the stronger channel hardening characteristics of CFmMIMO systems compared to mMIMO systems also ensure the reliability that URLLC is concerned with. However, the interference among UEs in CFmMIMO systems, caused by sharing the same time-frequency resources, can pose a challenge when it comes to supporting URLLC. Non-orthogonal multiple access (NOMA) is applied to CFmMIMO systems to address this issue \cite{Li2018NOMA}. Specifically, UEs are allocated to different clusters, and within the same cluster, UEs decode data through successive interference cancellation (SIC) technology, where transmit power for different users has different levels. Although mitigating interference is attractive, how to cluster UEs and allocate reasonable power to UEs with different channel conditions to ensure successful SIC and optimal performance makes deploying NOMA technology to CFmMIMO systems to better support URLLC challenging.

\subsection{Related Work and Research Gap}
Many existing works focus on characterizing performance within NOMA-aided CFmMIMO systems.
\cite{Li2018NOMA} and \cite{Nguyen2020Max} derived achievable rates for both downlink (DL) and uplink NOMA-aided CFmMIMO systems under Rayleigh fading models, respectively.
The authors of \cite{Zhang2020NOMA} further derived closed-form expressions for the signal-to-interference-plus-noise ratio (SINR) and DL achievable rates under correlated Rayleigh fading models, while \cite{Ohashi2021} considered the case of channel non-reciprocity.
In \cite{Rezaei2020}, achievable rates for NOMA-aided CFmMIMO systems under various precoding techniques, such as maximum ratio transmission (MRT), full-pilot zero-forcing, and modified regularized zero forcing, were derived.
The stochastic geometry approach was also employed in \cite{Kusaladharma2019Achievable} and \cite{Kusaladharma2021Achievable} to more realistically model wireless channel transmission characteristics, focusing on the investigation of achievable rates.  The authors of \cite{Zhang2022Performance} derived the DL SE and EE for the Internet of Things over spatially correlated Rician fading channels.

To improve the performance of NOMA-aided CFmMIMO systems, existing works have carefully designed power allocation algorithms.
In the uplink NOMA-aided CFmMIMO systems, \cite{Nguyen2020Max} and \cite{Zhang2022Performance} proposed power allocation algorithms to maximize the minimum achievable rate and the sum SE, respectively.
Two algorithms for allocating the power in DL NOMA-aided CFmMIMO systems were proposed in \cite{Zhang2019Spectral} and \cite{Le2021Learning}, aiming to maximize the minimum SE and the sum SE, respectively.
In addition to power allocation, the UE clustering scheme plays an important role in the NOMA-aided CFmMIMO system.
The authors of \cite{BasharOn2020} initially studied direct UE pairing schemes, including random pairing, close pairing, and far pairing.
The authors of \cite{Dang2022Optimal} further studied UE pairing schemes by relaxing binary variables into continuous variables and using an inner approximation method to solve the problem of maximizing the minimum achievable rate.
In \cite{Le2021Learning}, the K-means++ and improved K-means++ algorithms were introduced for UE clustering in NOMA-aided CFmMIMO systems.

Existing works have proved that NOMA-aided CFmMIMO systems can significantly enhance the performance of wireless communication systems.
Therefore, we aim to maximize the achievable sum rate (ASR) by jointly optimizing power allocation and UE clustering in the NOMA-aided CFmMIMO system to support URLLC.
However, for URLLC, the performance analysis and optimization on NOMA-aided CFmMIMO systems should be reconsidered.
Firstly, there is a lack of analysis on the ergodic rates of URLLC UEs in the NOMA-aided CFmMIMO system, which needs to be derived.
Besides, short packets are used for data transmission in URLLC to ensure low transmission latency and simplify the decoding complexity at the receiver \cite{Polyanskiy2010,ZhaoQueue2022}. In this case, the Shannon capacity, which is based on the law of large numbers, is no longer applicable due to the non-negligible decoding error probability caused by finite blocklength coding (FBC). In \cite{Polyanskiy2010}, an approximate expression for the maximum achievable rate with respect to decoding error probability, code length, and signal-to-noise ratio (SNR) under FBC has been derived. Unlike the Shannon formula, the maximum achievable rate under FBC is neither convex nor concave with respect to SNR \cite{RenJoint2020}.
Therefore, the power allocation algorithm proposed in \cite{Zhang2019Spectral, Le2021Learning} and the UE clustering method presented in \cite{Dang2022Optimal} are no longer applicable.
The clustering scheme proposed in \cite{Le2021Learning}, which relies solely on large-scale fading, also cannot guarantee the maximization of ASR.

\subsection{Contribution and Outline}
In this paper, we theoretically analyze the impact of FBC on the NOMA-aided CFmMIMO system and attempt to achieve optimal performance in terms of ASR by jointly optimizing power allocation and UE clustering. To the best of our knowledge, this is the first study on the NOMA-aided CFmMIMO system for URLLC. The main contributions of this paper are summarized as follows:

\begin{itemize}
	
	\item We propose a NOMA-aided CFmMIMO system with consideration of URLLC in the finite blocklength regime, where all APs serve URLLC UEs in different clusters simultaneously. We analyze the system performance and derive the lower bound (LB) for the ergodic data rate of URLLC UEs under FBC. To optimize the system performance, an ASR maximization problem is formulated by jointly optimizing power allocation and UE clustering.
To tackle such an intractable problem, we decompose it into two sub-problems, i.e., the power allocation problem and the UE clustering matrix design problem. Then the original problem can be efficiently solved by a two-step iterative optimization algorithm.
	
	\item We use successive convex approximation (SCA) to solve the power allocation problem. Specifically, the objective function (OF) is first transformed into a convex function using logarithmic transformation. Then, the complex constraint conditions generated during the transformation process are converted into convex constraints through scaling. By utilizing SCA, the original power allocation problem is transformed into a series of geometric programming (GP) problems that can be solved efficiently.

	\item We solve the UE clustering matrix design problem based on graph theory. Specifically, we first reformulate the UE clustering problem as a negative loop detection problem in a weight directed graph. Then, we construct a weight directed graph based on the current clustering situation. Finally, we adopt two negative loop searching algorithms to effectively find the negative loops in the graph.
\end{itemize}

Simulation results validate the tightness of the derived LB and demonstrate that the proposed algorithms outperform the benchmark algorithms in terms of ASR under various scenarios.

The rest of the paper is organized as follows. Section II gives the DL NOMA-aided CFmMIMO system model and formulates the joint power allocation and UE clustering problem for maximizing the ASR.
An iterative optimization algorithm is proposed in Section III, where the power allocation problem is transformed into a series of GP problems based on SCA and the UE clustering matrix design problem is converted into the problem of identifying \emph{differ-cluster negative loop} based on graph theory. Section IV presents the numerical results and analysis. Finally, the conclusion is drawn in Section V.

\emph{Notations:} In this paper, vectors and matrices are denoted by lowercase and uppercase bold letters, respectively. $x_{i,j}$ represents the $i$-th row and $j$-th column element of matrix $\mathbf{X}$. $\mathbf{X}^H$ represents the Hermitian of matrix $\mathbf{X}$. $\pi_i$ denotes the $i$-th elements of vector $\pmb{\pi}$. $\mathbb{C} ^{i\times j}$ represents space of $i\times j$ complex number matrices. $\mathbb{E}[i]$ denotes the expected value of $i$.

\section{System Model and Problem Formation}

\begin{figure}[ht]
	\centering
	\includegraphics[scale=0.6]{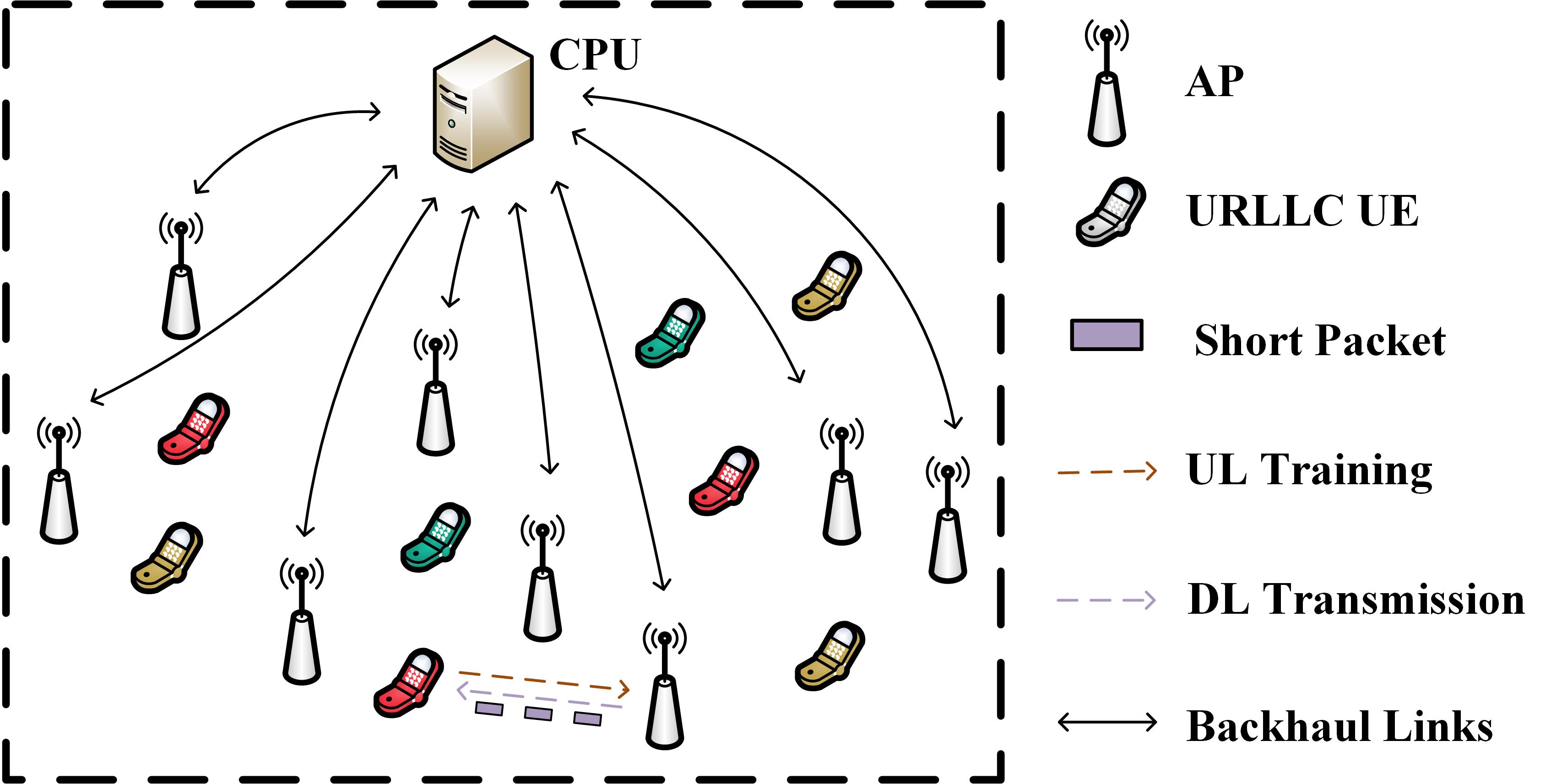}
	\caption{The NOMA-aided CFmMIMO system for URLLC.}
	\label{fig:model}
\end{figure}

We consider a DL NOMA-aided CFmMIMO system, where $N$ single antenna URLLC UEs are served by $M$ APs that are equipped with $L$ antennas each, as illustrated in Fig. \ref{fig:model}.
The URLLC UEs are organized into $G$ clusters, and each cluster is represented by a different color in the Fig \ref{fig:model}.
For convenience, we define $\mathcal{M} \triangleq \left \{ 1,2,\cdots,M \right \}$, $\mathcal{N} \triangleq \left \{ 1,2,\cdots,N \right \}$, and $\mathcal{G} \triangleq \left \{ 1,2,\cdots,G \right \}$ as the index of APs, URLLC UEs, and clusters.
Communication between APs and URLLC UEs follows the TDD protocol, where each coherent interval $\tau_c = \tau_p + \tau_d$ is divided into uplink training $\tau_p$ and downlink data transmission $\tau_d$.
To represent the association between URLLC UEs and clusters, we employ a matrix $\mathbf{X} = \left [ x_{gn} \right ]_{g \in \mathcal{G},n \in\mathcal{N} }\in \mathbb{C}^{G\times N}$ and a vector $\pmb{\pi} = \left [ \pi_{n} \right ]_{n \in\mathcal{N} } \in \mathbb{C}^{N\times 1}$. Specifically, if URLLC UE $n$ is allocated to cluster $g$, then $x_{gn} = 1$ and $\pi_n = g$; otherwise $x_{gn} = 0$.
We consider the block fading model \cite{RenJoint2020, Peng2023, PengResourceJul,10138403}, and the channel between AP $m$ and URLLC UE $n$ is $\mathbf{h}_{mn} = \sqrt{\beta_{mn} } \pmb{\zeta }_{mn}$, where $\beta_{mn}$ represents the large-scale fading coefficient influenced by path loss and shadow fading, and $\pmb{\zeta }_{mn} \in \mathbb{C}^{L \times 1}$ denotes the small scale fading. Each element of $\pmb{\zeta}_{mn}$ is independent complex Gaussian random variables with zero mean and unit variance, i.e., $\mathcal{CN}(0,1)$.
Define the weighted directed graph of the system, where all URLLC UEs form the nodes, and the relationships between URLLC UEs belonging to different clusters form the edges in the graph.

\subsection{Uplink Pilot Training and Channel Estimation}
In the uplink pilot training phase, all UEs need to transmit pilot sequences for channel estimation.
Within the same cluster, URLLC UEs use a shared pilot sequence for channel estimation, while URLLC UEs in different clusters send orthogonal pilot sequences \cite{Rezaei2020}. Therefore, the length of the pilot sequence is equal to the number of clusters, i.e., $\tau_p = G$.
Specifically, the pilot sequence transmitted by URLLC UEs in cluster $g$ is denoted by $\pmb{\phi }_{g} \in \mathbb{C}^{G\times 1} $, where $\left \| \pmb{\phi }_{g} \right \| ^2 = 1$.
Besides, it holds that $\pmb{\phi }_{g}^H \pmb{\phi }_{g'}=0, \forall g\ne g'$ for different clusters $g$ and $g'$ due to orthogonality.
The received sequence at AP $m$ is given by
\begin{equation}\label{received signal at AP m}
  \begin{aligned}
    \mathbf{Y}_{m}^p = \sqrt{G p_p}\sum_{n\in \mathcal{N}}\sum_{g\in \mathcal{G}}x_{gn}\mathbf{h}_{mn}\pmb{\phi}_g^H + \mathbf{N}_m,
  \end{aligned}
\end{equation}
where $p_p$ denotes the normalized pilot power, and $\mathbf{N}_m \in \mathbb{C}^{L\times G}$ is the Gaussian noise matrix with i.i.d $\mathcal{CN}(0,1)$ elements at AP $m$.
After receiving the sequences, all APs estimate instantaneous CSI with all URLLC UEs using minimum mean square error (MMSE) estimation.
Since the received signal $\mathbf{y}_{m\pi_n}^p$ follows a Gaussian distribution \cite{Rezaei2020}, the MMSE estimate $\mathbf{\hat{h}_{mn}}$ can be reexpressed as
\begin{equation}\label{theta}
  \mathbf{\hat{h}}_{mn} = \sqrt{\theta_{mn}}\pmb{\nu}_{m\pi_n} ,
\end{equation}
where $\theta_{mn} = \frac{G p_p\beta_{mn}^2 }{1 + G p_p \sum_{n'=1}^{N}x_{\pi_nn'}\beta_{mn'} } $ and $\pmb{\nu}_{m\pi_n} $ represents a circularly symmetric complex Gaussian random vector with zero mean and identity covariance matrix.

\subsection{DL Data Transmission}

{\color{blue}
During the DL data transmission phase, all APs first perform superposition coding for each clustering. The superposition coded data signal for the $g$-th cluster in the AP $m$ is given by
\begin{equation}\label{superposition coded}
  \begin{aligned}
     \bar{q}_{mg} =\sum_{n\in \mathcal{N}} x_{gn} \sqrt{p_{mn}} q_n,
  \end{aligned}
\end{equation}
where $q_n$ and $p_{mn}$ denote the data signals transmitted to URLLC UE $n$ and the power allocated to URLLC UE $n$ by AP $m$, respectively.
Then, each AP uses distributed MRT for precoding, where the precoding vector between AP $m$ and URLLC UE $n$ is $\pmb{\nu}_{m\pi_n}$ \cite{Rezaei2020, 9586055}.
The signal transmitted by AP $m$ is given by
\begin{equation}\label{transmitted signal by AP m}
  \begin{aligned}
    \mathbf{w}_m = \sum_{g\in \mathcal{G}} \pmb{\nu}_{mg}\bar{q}_{mg}= \sum_{n\in \mathcal{N}}\sqrt{p_{mn}} \pmb{\nu}_{m\pi_n} q_n.
  \end{aligned}
\end{equation}}
All URLLC UEs receive signals from all APs, and the received signal at URLLC UE $n$ is expressed as
\begin{equation}\label{received signal at UE}
  \begin{aligned}
    y_n = \sum_{m=1}^{M} \sum_{n'=1}^{N} \sqrt{p_{mn'}}\mathbf{h}_{mn}^{H}\pmb{\nu}_{m\pi_{n'}}q_{n'} + n_n,
  \end{aligned}
\end{equation}
where $n_n \sim \mathcal{CN}(0,1)$ represents the noise at URLLC UE $n$.

In the NOMA-aided CFmMIMO system, due to channel reciprocity, it can be assumed that the uplink and downlink CSI remains unchanged. To implement power-domain NOMA, we arrange URLLC UEs in descending order based on the mean of the effective channel gains as follows \cite{Rezaei2020}
\begin{equation}\label{rank condition}
  \Omega_1 \ge \Omega_2 \ge \dots \ge \Omega_N,
\end{equation}
where $\Omega_n = \mathbb{E}\left  \{ \left | {\textstyle \sum_{m=1}^{M}}  \pmb{\nu}_{m\pi_n}^H\hat{\mathbf{h}}_{mn} \right |^2 \right \}$, $\forall n\in\mathcal{N}$.
Then, SIC is employed and URLLC UEs within the same cluster are decoded sequentially according to (\ref{rank condition}).
{\color{blue}
Following the principles of NOMA, URLLC UE $n$ initially decodes URLLC UE $n_1 > n$ with poorer channel condition within the same cluster. Subsequently, its own data is successively decoded after removing the interference from these URLLC UEs \cite{ZhangSWIPT2023,Rezaei2020}.}
To achieve effective deployment of SIC, the following conditions need to be met \cite{ZhangSWIPT2023,Rezaei2020}.
\begin{equation}\label{conditions for SIC}
  \begin{aligned}
      \mathbb{E}\left \{ R_n(\gamma_n^{n_1}) \right \} \ge \mathbb{E}\left \{ R_{n_1}(\gamma_{n_1}^{n_1}) \right \}, n \le n_1 \cap \pi_n = \pi_{n_1},
  \end{aligned}
\end{equation}
where $R_n(\gamma_n^{n_1})$ represents the achievable rate of URLLC $n$ with SINR $\gamma_n^{n_1}$, $\gamma_n^{n_1}$ and $\gamma_{n_1}^{n_1}$ denote the SINR in decoding the signal of URLLC UE $n_1$ by URLLC UE $n$ and itself, respectively.
According to (\ref{conditions for SIC}), the SINR of URLLC UE $n$ is defined as $\gamma_n = \min \left ( \gamma_n^{n},\gamma_{n_1}^n \right )$, $\forall n_1 \le n \cap \pi_{n_1} = \pi_n$, to guarantee that URLLC UE $n_1$ can perform SIC and decode the data of URLLC UE $n$.

However, achieving a perfect SIC is infeasible due to statistical CSI knowledge limitations, channel estimation error, and pilot contamination within the cluster \cite{Rezaei2020}. Hence, the received signal after an imperfect SIC process is given in (\ref{receive signal after SIC}), where $\iota_{mnn'} = \mathbf{h}_{mn}^{H}\pmb{\nu}_{m\pi_{n'}}$, $Y_{ds,n}$, $Y_{bu,n}$, $Y_{ui,n}$, $Y_{ici,n}$, and $Y_{rici,n}$ represent the desire signal, uncertainty of precoding gain, intra-cluster interference, intra-cluster interference after SIC and residual interference due to imperfect SIC for URLLC UE $n$, respectively, and $\hat{q}_n$ represents the estimate of $q_{n}$.
The correlation between the estimated parameter $\hat{q}_n$ and its actual value $q_n$ can be modeled as
\begin{equation}\label{relation between q and estimation q}
  \begin{aligned}
    q_{n} = c_{n}\hat{q}_{n}+e_{n},
  \end{aligned}
\end{equation}
where $\hat{q}_n \sim \mathcal{CN}(0,1)$, $ e_n\sim \mathcal{CN}(0,\sigma_{e_n}^2/\left ( 1+\sigma_{e_n}^2 \right ) )$, and $ c_n = 1/\sqrt{1+\sigma_{e_n}^2} $. Moreover, $\hat{q}_n$ and $e_n$ are statistically independent.
Then, the expression of $\gamma_n^n$ is given by
\begin{equation}\label{SINR of MRT}
  \begin{aligned}
    \gamma_n^n = \frac{ \left | Y_{ds,n} \right |^2 }{ \left | Y_{ici,n} \right |^2 +\left | Y_{rici,n} \right |^2 +\left | Y_{bu,n} \right |^2 +\left | Y_{ui,n} \right |^2  +n_n^2}.
  \end{aligned}
\end{equation}

\begin{figure*}
   \begin{equation}\label{receive signal after SIC}
     \begin{aligned}
     \tilde{y}_n = &\underbrace{\sum_{m=1}^{M} \sqrt{p_{mn}}  \mathbb{E}\left \{\iota_{mnn}\right \}}_{Y_{ds,n}}q_{n}+\underbrace{\sum_{m=1}^{M}  \sqrt{p_{mn}}\left (\iota_{mnn} -\mathbb{E}\left \{\iota_{mnn}  \right \} \right )q_{n}}_{Y_{bu,n}}+\underbrace{\sum_{m=1}^{M} \sum_{ g'\ne \pi_n}^{G}  \sum_{n'=1}^{N} x_{g'n'} \sqrt{p_{mn'}}\iota_{mnn'}q_{n'}}_{Y_{ui,n}}\\
                 & + \underbrace{\sum_{m=1}^{M} \sum_{n'=0}^{n-1} x_{\pi_nn'} \sqrt{p_{mn'}}\iota_{mnn'}q_{n'}}_{Y_{ici,n}}+ \underbrace{\sum_{m=1}^{M} \sum_{n'=n+1}^{N} x_{\pi_nn'} \sqrt{p_{mn'}} \left (\iota_{mnn'}q_{n'} - \mathbb{E}\left \{ \iota_{mnn'} \right \}\hat{q}_{n'} \right )}_{Y_{rici,n}} + n_n .
     \end{aligned}
   \end{equation}
  {\noindent} \rule[-10pt]{18cm}{0.05em}
\end{figure*}

\subsection{Achievable Data Rate for URLLC UE}
To meet the stringent QoS requirements of URLLC, FBC, which can reduce transmission latency and decrease decoding complexity, is employed.
However, short packet transmissions lead to an increase in decoding error probability, which cannot be ignored.
Hence, the assumption of error-free Shannon capacity with infinite blocklength is no longer applicable.
Denote $\epsilon_n$ as the maximum decoding error probability of URLLC UE $n$. The data rate under FBC for URLLC UE $n$ can be approximated as \cite{Polyanskiy2010}
\begin{equation} \label{rate of user n}
  R_n(\gamma_n) = \eta \log_2(1+ \gamma_n) - \sqrt{\frac{\eta V(\gamma _n)}{\tau_d} } \frac{Q^{-1}{(\epsilon _n)}}{\ln 2} ,
\end{equation}
where $\eta = \tau_d/\tau_c$, $ Q^{-1}{(\epsilon _n)}$ represents the inverse of the Gaussian Q-function, and $V(\gamma _n)= 1 - (1 + \gamma _n)^{-2}$ denotes the channel dispersion. The ergodic data rate for URLLC UE $n$ under FBC is $\bar{R}_n = \mathbb{E}_{\gamma_n} \left \{ \max(R_n(\gamma_n),0) \right \}$.
Obtaining a closed-form expression for $\bar{R}_n$ and optimizing it presents a significant challenge. To address this challenge, we can derive the LB of $\bar{R}_n$ that captures the ergodic data rate and facilitates optimization.
Define $r_n(\frac{1}{\gamma_n}) = R_n(\gamma_n)$, where $r_n(x) = \log_2(1+ \frac{1}{x}) - \frac{Q^{-1}{(\epsilon _n)}}{\ln2\sqrt{n_d}} \sqrt{\frac{1+2x}{(1+x)^2} }$. To ensure $R_n(x)\ge0$, we require inequality $f\left(\frac{1}{x}\right) \triangleq \frac{(1+\frac{1}{x})\log_2(1+x)}{\sqrt{1+\frac{2}{x}}} \ge \frac{Q^{-1}(\epsilon_n)}{\ln2\sqrt{n_d}}$ to hold. It is obvious that the first derivative of $f(x)$ is negative, indicating the monotonic decrease of $f(x)$. Thus, the domain that makes $r(x)>0$ is $0 < x < f^{-1}\left ( \frac{Q^{-1}(\epsilon_n)}{\ln2\sqrt{n_d}} \right )$ \cite{Peng2023}.
Since the function $f(x)$ is decreasing and convex within this domain, the following conclusions can be drawn by utilizing Jensen's inequality:
\begin{equation}\label{LB of ergodic rate}
  \hat{R}_n \triangleq R \left ( \bar{\gamma}_n \right ) \le\mathbb{E}_{\gamma_n}\left \{ R(\gamma_n) \right \}  \le   \bar{R}_n,
\end{equation}
where $\hat{R}_n$ is the LB of the ergodic data rate and $\bar{\gamma}_n = \mathbb{E}_{\gamma_n}^{-1}\left \{ \gamma_n^{-1} \right \}$. The expression of $\hat{R}_n$ is derived in the following theorem.
\begin{theorem}\label{theorem of LB}
  The LB of the ergodic data rate for URLLC UE $n$ ,$\forall n \in \mathcal{N}$ under FBC in the NOMA-aided CFmMIMO system can be given by\footnote{ The LB of the ergodic rate is affected by the accuracy of channel estimation. Too small a value of $\theta_{mn}$ directly leads to a large $\bar{\gamma}_{n}$, and the corresponding LB of the ergodic rate can be very low. In addition, the LB of the ergodic rate exhibits an upper bound as the number of antennas increases. As the number of antennas per AP increases, $\bar{\gamma}_{n}$ gradually decreases and stabilizes, resulting in the LB of the ergodic rate becoming bounded.}
  \begin{equation}\label{LB of rate}
    \hat{R}_n= \eta \log_2(1+ \bar{\gamma}_n) - \sqrt{\frac{\eta V(\bar{\gamma} _n)}{\tau_d} } \frac{Q^{-1}{(\epsilon _n)}}{\ln 2},
  \end{equation}
  where $\bar{\gamma}_n = \min(\bar{\gamma}_n^n,\bar{\gamma}_{n_1}^n), \forall n_1 \le n \cap \pi_n = \pi_{n_1}$, $\bar{\gamma}_n^n$ and $\bar{\gamma}_{n_1}^n$ are given by (\ref{bar gamma}) and (\ref{bar gamma2}), respectively.
\end{theorem}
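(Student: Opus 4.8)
The plan is to build on the Jensen-type sandwich already established in (\ref{LB of ergodic rate}): since $r_n(\cdot)$ is convex and decreasing on the admissible domain and $R_n(\gamma_n)=r_n(1/\gamma_n)$, one has $\mathbb{E}_{\gamma_n}\{R_n(\gamma_n)\}\ge R_n(\bar\gamma_n)$ with $\bar\gamma_n=1/\mathbb{E}_{\gamma_n}\{\gamma_n^{-1}\}$, so the only remaining task is to turn $\bar\gamma_n$ into a closed form. Because $\gamma_n=\min(\gamma_n^n,\gamma_{n_1}^n)$ over all $n_1\le n$ with $\pi_{n_1}=\pi_n$ (with the convention that the index set includes $n_1=n$), I would reduce the problem to computing the effective SINRs $\bar\gamma_n^n=1/\mathbb{E}\{(\gamma_n^n)^{-1}\}$ and $\bar\gamma_{n_1}^n=1/\mathbb{E}\{(\gamma_{n_1}^n)^{-1}\}$ separately, and then set $\bar\gamma_n=\min(\bar\gamma_n^n,\bar\gamma_{n_1}^n)$ as in the statement.

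For $\bar\gamma_n^n$, the key observation is that the desired-signal term $Y_{ds,n}=\sum_{m}\sqrt{p_{mn}}\,\mathbb{E}\{\iota_{mnn}\}$ in (\ref{SINR of MRT}) is deterministic, so $\mathbb{E}\{(\gamma_n^n)^{-1}\}=\big(\mathbb{E}\{|Y_{ici,n}|^2\}+\mathbb{E}\{|Y_{rici,n}|^2\}+\mathbb{E}\{|Y_{bu,n}|^2\}+\mathbb{E}\{|Y_{ui,n}|^2\}+1\big)/|Y_{ds,n}|^2$; that is, the effective SINR is a ratio of expectations rather than the expectation of a ratio. I would then evaluate each moment of the terms in (\ref{receive signal after SIC}) using the statistics of $\iota_{mnn'}=\mathbf{h}_{mn}^{H}\pmb{\nu}_{m\pi_{n'}}$ together with (\ref{theta}): (i) the first two moments of $\iota_{mnn}$ follow from $\hat{\mathbf{h}}_{mn}=\sqrt{\theta_{mn}}\pmb{\nu}_{m\pi_n}$, the independence of the MMSE estimate and its error, and $\mathbb{E}\{\|\pmb{\nu}_{m\pi_n}\|^2\}=L$; (ii) for the inter-cluster term $Y_{ui,n}$, the precoder $\pmb{\nu}_{mg'}$ of a different cluster $g'\ne\pi_n$ is statistically independent of $\mathbf{h}_{mn}$ by pilot orthogonality, so $\mathbb{E}\{|\iota_{mnn'}|^2\}=L\beta_{mn}$; (iii) for the intra-cluster terms $Y_{ici,n}$ and $Y_{rici,n}$, UEs sharing cluster $\pi_n$ share the pilot and hence the precoding direction, which produces a coherent pilot-contamination contribution in addition to the incoherent one, and the residual-SIC randomness must be propagated through the decomposition $q_n=c_n\hat q_n+e_n$ of (\ref{relation between q and estimation q}); (iv) all cross-AP terms vanish since channels and estimates at distinct APs are independent and zero-mean. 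Collecting these gives the closed form (\ref{bar gamma}).

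The derivation of $\bar\gamma_{n_1}^n$ (UE $n_1$, with $n_1\le n$, decoding the data of UE $n$ before its own) is structurally identical after exchanging the roles of the two UEs: the desired part is again deterministic by channel hardening, the denominator collects the corresponding intra- and inter-cluster interference plus residual-SIC and precoding-uncertainty powers, and the shared pilot of the common cluster again yields a coherent term in $\iota_{mn_1 n}$; this produces (\ref{bar gamma2}). Substituting $\bar\gamma_n=\min(\bar\gamma_n^n,\bar\gamma_{n_1}^n)$ into (\ref{rate of user n}) and using $V(\bar\gamma_n)=1-(1+\bar\gamma_n)^{-2}$ then yields (\ref{LB of rate}).

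I expect the main obstacle to be the bookkeeping in steps (iii)--(iv): correctly separating, for every interfering UE, the coherent pilot-sharing contribution from the incoherent one, and carrying the imperfect-SIC model through the fourth-order moments of the Gaussian vectors without dropping cross terms. A secondary subtlety is justifying the passage from the random $\gamma_n$ to the deterministic $\bar\gamma_n$: this is clean for each individual $\gamma_{n_1}^n$ precisely because its numerator is non-random, while the use of $\min(\bar\gamma_n^n,\bar\gamma_{n_1}^n)$ for the SIC minimum should be stated explicitly as the effective-SINR convention adopted here.
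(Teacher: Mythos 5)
Your proposal follows essentially the same route as the paper's Appendix~A proof: exploit the fact that the desired-signal term is deterministic to turn $\mathbb{E}\{(\gamma_n^n)^{-1}\}$ into a ratio of expectations, compute the second moments of $Y_{bu,n}$, $Y_{ui,n}$, $Y_{ici,n}$, and $Y_{rici,n}$ term by term using the MMSE estimate/error independence, pilot orthogonality across clusters, the coherent pilot-sharing contribution within a cluster, and the imperfect-SIC model $q_n=c_n\hat q_n+e_n$, then take the minimum over decoding orders and substitute into the FBC rate expression. The bookkeeping you anticipate is exactly what the paper carries out, and the $\min(\bar\gamma_n^n,\bar\gamma_{n_1}^n)$ effective-SINR convention you flag is adopted (without further justification) in the paper as well.
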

\begin{proof}
  Please refer to Appendix \ref{appendix A}.
\end{proof}

\begin{figure*}
  \begin{equation}\label{bar gamma}
    \begin{aligned}
      \bar{\gamma}_n^n = \frac{L\left (\sum\limits_{m=1}^{M} \sqrt{p_{mn}\theta_{mn}}  \right )^2  }{ \sum\limits_{n'=1}^{N}\sum\limits_{m=1}^{M}p_{mn'}\beta_{mn} + L\sum\limits_{n'=1}^{n-1}x_{\pi_nn'}\left (\sum\limits_{m=1}^{M} \sqrt{p_{mn'}\theta_{mn}}  \right )^2 +L\sum\limits_{n'=n+1}^{N}\left ( 2-2c_{n'} \right ) x_{\pi_nn'}\left (\sum\limits_{m=1}^{M} \sqrt{p_{mn'}\theta_{mn}}  \right )^2 +1 }.
    \end{aligned}
  \end{equation}
  \begin{equation}\label{bar gamma2}
    \begin{aligned}
      \bar{\gamma}_{n_1}^n = \frac{L\left (\sum\limits_{m=1}^{M} \sqrt{p_{mn}\theta_{mn_1}}  \right )^2  }{ \sum\limits_{n'=1}^{N}\sum\limits_{m=1}^{M}p_{mn'}\beta_{mn_1} + L\sum\limits_{n'=1}^{n-1}x_{\pi_nn'}\left (\sum\limits_{m=1}^{M} \sqrt{p_{mn'}\theta_{mn_1}}  \right )^2 +L\sum\limits_{n'=n+1}^{N}\left ( 2-2c_{n'} \right ) x_{\pi_nn'}\left (\sum\limits_{m=1}^{M} \sqrt{p_{mn'}\theta_{mn_1}}  \right )^2 +1 }.
    \end{aligned}
  \end{equation}
  {\noindent} \rule[-10pt]{18cm}{0.05em}
\end{figure*}

\subsection{Problem Formation}

URLLC cares about latency and reliability and requires low latency and high reliability. Transmitting a substantial amount of data within a given unit of time ensures the rapid transfer of information while maintaining reliability. Therefore, achievable data rates are a key metric that can effectively gauge the performance of URLLC in terms of latency and reliability. The achievable rate is used as an indicator to measure the ability to support URLLC in mMIMO \cite{RenJoint2020} and CFmMIMO systems\cite{Peng2023, PengResourceJul}. In addition, due to the impact of channel hardening in NOMA-aided CFmMIMO systems, random channel gains can be negligible. Therefore, we aim to jointly optimize power allocation and UE clustering based on large-scale fading information to maximize ASR.
{\color{blue}
Note that optimization is based on large-scale fading information rather than small-scale fading, which is beneficial for URLLC \cite{RenJoint2020}. In other words, the algorithm only needs to be rerun when the large-scale fading information, varying slowly compared to the small-scale fading, has changed. \footnote{{\color{blue}The precoding at the APs and the proposed algorithms are two separate components of the system. Precoding needs to be performed during the downlink transmission phase at each coherent interval, but the proposed algorithms do not require corresponding rerun.}}}
Mathematically, the optimization problem can be formulated as
\begin{subequations}\label{orignal optimization problem}
\begin{align}
\max_{\mathbf{P}, \mathbf{X}  } \sum_{n \in \mathcal{N}}\hat{R}_n   \ \ \tag{\ref{orignal optimization problem}}
\end{align}
\begin{alignat}{2}
\text{s.t.}\ & \hat{R}_n \ge \hat{R}_n^{req},\  \forall n \in \mathcal{N},   \label{orignal constrited 1} \\
    & \sum_{n \in \mathcal{N}} p_{mn} \le p_{max},  \forall m \in \mathcal{M},  \label{orignal constrited 2} \\
    &p_{mn} \le p_{mn'},\  \pi_n=\pi_{n'}, n\le n', \forall n,n' \in \mathcal{N}, \forall m \in \mathcal{M},    \label{orignal constrited 4} \\
    &\sum_{g \in \mathcal{G}}x_{gn} = 1,\ x_{gn}\in\left \{ 0,1 \right \},\ \forall n \in \mathcal{N}, \label{orignal constrited 3}
\end{alignat}
\end{subequations}
where $\mathbf{P} = \left [ p_{mn} \right ]_{m\in\mathcal{M},n\in \mathcal{N}}$, $\hat{R}_n^{req}$ and $p_{max}$ represent the minimum rate requirement for URLLC UE $n$ and maximum transmit power at each AP, constraint (\ref{orignal constrited 1}) ensures that each URLLC UE meets its minimum data rate requirement, constraint (\ref{orignal constrited 2}) limits the maximum transmission power at each AP, constraint (\ref{orignal constrited 4}) is the necessary condition to implement SIC, and constraint (\ref{orignal constrited 3}) ensures that each URLLC UE is allocated to one cluster and is exclusively assigned to that cluster.

\section{Problem Analysis and Solution}
In order to solve mixed-integer non-linear programming problem (\ref{orignal optimization problem}) with affordable complexity, we propose a tractable algorithm that utilizes alternating optimization to separately and iteratively solve $\mathbf{P}$ and $\mathbf{X}$. Specifically, our algorithm involves decomposing the original problem, which aims to maximize the ASR, into two subproblems: DL power allocation with fixed UE clustering and UE clustering matrix design with fixed power allocation.
Through the alternating optimization process, the global ASR continues to increase and eventually converges, since the OF is upper-bounded within the feasible set. In the rest of this section, the details of the algorithm are described.

\subsection{DL power allocation based on SCA}
With the fixed UE clustering matrix, the optimization problem for DL power allocation can be expressed as
\begin{subequations}\label{optimization problem 11}
\begin{align}
\max_{\mathbf{P}} \sum_{n\in \mathcal{N}} \frac{\eta}{\ln 2}\left ( \ln(1+\bar{\gamma}_n)-a_nM(\bar{\gamma}_n) \right )    \ \ \tag{\ref{optimization problem 11}}
\end{align}
\begin{alignat}{2}
\text{s.t.}\  (\ref{orignal constrited 1}), (\ref{orignal constrited 2}), (\ref{orignal constrited 4}),
\end{alignat}
\end{subequations}
where $a_n = \frac{ Q^{-1}(\epsilon _n)}{\sqrt{\eta \tau_d}}$ and $M(\bar{\gamma}_n) = \sqrt{1-(1+\bar{\gamma}_n^{-1})^{-2}}$. To simply the problem (\ref{optimization problem 1}) as a GP, the auxiliary variables $\kappa_n$, $\forall n \in \mathcal{N}$, is introduced and then problem (\ref{optimization problem 1}) can be reexpressed as
 \begin{subequations}\label{optimization problem 1}
\begin{align}
\max_{\mathbf{P},\pmb{\kappa}} \sum_{n\in \mathcal{N}}\frac{\eta}{\ln 2}\left ( \ln(1 + \kappa_n)-a_nM(\kappa_n) \right )    \ \ \tag{\ref{optimization problem 1}}
\end{align}
\begin{alignat}{2}
\text{s.t.}\ & \kappa_n \le \bar{\gamma}_n^n ,\ \forall n \in \mathcal{N}, \label{problem 11 constrited 1} \\
    & \kappa_n \le \bar{\gamma}_{n_1}^n ,\ \forall n_1\le n \cap \pi_n = \pi_{n_1},\forall n,n_1 \in \mathcal{N}, \label{problem 11 constrited 2} \\
    &  \kappa_n \ge \hat{\gamma}_n^{req},\ \forall n \in \mathcal{N}, \label{problem 11 constrited 3} \\
    &(\ref{orignal constrited 2}), (\ref{orignal constrited 4}),
\end{alignat}
\end{subequations}
where $\pmb{\kappa} = \left [ \kappa_1, \kappa_2,\cdots,\kappa_N \right ]$, and $\hat{\gamma}_n^{req} = {R}_n^{-1} \left ( \hat{R}_n^{req}  \right ) $ represents the minimum SINR requirement for URLLC UE $n$. The OF of the problem (\ref{optimization problem 1}) is obviously non-convex.
To address this problem, we given the lower bound $\ln(1 + \kappa_n) \ge \rho_n \ln (\kappa_n) + \chi_n,$ and the upper bound $M(\kappa_n) \le \hat{\rho}_n \ln (\kappa_n)+ \hat{\chi}_n$ with given $\bar{\kappa}_n$ according to Lemma 3 and Lemma 4 of \cite{10041787}, respectively, where $\rho_n$, $\chi_n$, $\hat{\rho}_n$, and $\hat{\chi}_n$ are given by
\begin{equation}\label{expression of 1}
  \begin{aligned}
    \rho_n = \frac{\bar{\kappa}_n}{1 + \bar{\kappa}_n},  \chi_n = \ln\left ( 1 + \bar{\kappa}_n\right ) -  \frac{\bar{\kappa}_n}{1 + \bar{\kappa}_n} \ln\left ( \bar{\kappa}_n\right ),
  \end{aligned}
\end{equation}
\begin{equation}\label{expression of 2}
  \begin{aligned}
    \hat{\rho}_n = \frac{\bar{\kappa}_n}{\sqrt{\bar{\kappa}_n^2+2\bar{\kappa}_n}}-\frac{\bar{\kappa}_n\sqrt{\bar{\kappa}_n^2+2\bar{\kappa}_n}}{\left ( 1+\bar{\kappa}_n \right )^2 },
  \end{aligned}
\end{equation}
\begin{equation}\label{expression of 3}
  \begin{aligned}
    \hat{\chi}_n = \sqrt{1-{(1+\bar{\kappa}_n)^{-2}} }-\hat{\rho}_n ln(\bar{\kappa}_n).
  \end{aligned}
\end{equation}

Then, the OB of the problem (\ref{optimization problem 1}) can be simplified. However, due to the complexity of constraints (\ref{problem 11 constrited 1}) and (\ref{problem 11 constrited 2}), problem (\ref{optimization problem 1}) still cannot be directly solved. Thus, we handle constraints (\ref{problem 11 constrited 1}) and (\ref{problem 11 constrited 2}) to transform problem (\ref{optimization problem 1}) into a GP problem.
Denote the numerators of $\bar{\gamma}_{n}^n$ and $\bar{\gamma}_{n_1}^n$ as $\zeta_{n}^2$ and $\hat{\zeta}_{n_1n}^2$, respectively, with denominators represented by $\varpi_n$ and $\hat{\varpi}_{n_1n}$.
Based on the Theorem 4 of \cite{10041787}, We can obtain the lower bound of $\zeta_{n}$ in the form of a monomial function with given $\bar{p}_{mn}$ as follows
\begin{equation}\label{reexpression of zetan}
  \begin{aligned}
    \zeta_n = \sum\limits_{m=1}^{M} \sqrt{L p_{mn}\theta_{mn}} \ge c_n \prod_{m\in \mathcal{K}}\left ( L p_{mn}\theta_{mn} \right )^{a_{mn}},
  \end{aligned}
\end{equation}
where $c_n$ and $a_{mn}$ are given by
\begin{equation}\label{coef 111}
  \begin{aligned}
    &a_{mn} = \sqrt{L \bar{p}_{mn}\theta_{mn}}/ (2 \zeta_n^*),c_n = \zeta_n^* \prod_{m\in \mathcal{K}}\left ( L \bar{p}_{mn}\theta_{mn} \right )^{-a_{mn}},
  \end{aligned}
\end{equation}
and $\zeta_n^*$ is obtained by using $ p_{mn} = \bar{p}_{mn} $.
Similarly, the lower bound of $\hat{\zeta}_{n_1n}$ in the form of a monomial function with given $\bar{p}_{mn}$ as follows
\begin{equation}\label{reexpression of zetan1n}
  \begin{aligned}
    \hat{\zeta}_{n_1n} = \sum\limits_{m=1}^{M} \sqrt{L p_{mn}\theta_{mn_1}} \ge \hat{c}_{n_1n} \prod_{m\in \mathcal{K}}\left ( L p_{mn}\theta_{mn_1} \right )^{\hat{a}_{mn_1n}},
  \end{aligned}
\end{equation}
where $c_{n_1n}$ and $a_{mn_1n}$ are given by
\begin{equation}\label{coef 222}
  \begin{aligned}
    &a_{mn_1n} = \sqrt{L \bar{p}_{mn}\theta_{mn_1}}/ (2 \hat{\zeta}_{n_1n}^*),\\
    &c_{n_1n} = \hat{\zeta}_{n_1n}^* \prod_{m\in \mathcal{K}}\left ( L \bar{p}_{mn}\theta_{mn_1} \right )^{-a_{mn_1n}},
  \end{aligned}
\end{equation}
and $\hat{\zeta}_{n_1n}^*$ is obtained by using $p_{mn} = \bar{p}_{mn}$.

Based on the analysis above, we can approximate the OF and constraints of the problem (\ref{optimization problem 1}) and then solve the approximate problem in an iterative manner. In the following, we provide a detailed explanation of the iterative process.

Firstly, in the $i$-th iteration, we denote the power allocation coefficient and the auxiliary variable as $\mathbf{P}^{\left ( i \right ) }$ and $\pmb{\kappa}^{\left ( i \right )}$, respectively.
Correspondingly, $\rho_n^{\left ( i \right ) }$, $\hat{\rho}_n^{\left ( i \right ) }$ can be obtained based on (\ref{expression of 1}) and (\ref{expression of 2}) for simplify OB. The lower bound for $n$-th term of OB can be expressed
\begin{equation}\label{nth term lower bound}
  \begin{aligned}
    &\frac{\eta}{\ln 2}\left ( \ln(1 + \kappa_n)-a_nM(\kappa_n) \right ) \\
    &\ge \frac{\eta}{\ln 2}\left ( \left ( \rho_n^{\left ( i \right ) } -a_n \hat{\rho}_n^{\left ( i \right ) } \right ) \ln (\kappa_n) + \chi_n^{\left ( i \right ) }  -a_n \hat{\chi}_n^{\left ( i \right ) } \right ),
  \end{aligned}
\end{equation}
where $\chi_n^{\left ( i \right ) }$ and $\hat{\chi}_n^{\left ( i \right ) }$ do not need to be calculated as the constant terms in the OB are omitted.
Simultaneously, $a_{mn}^{\left ( i \right ) }$, $c_n^{\left ( i \right ) }$, $a_{mn_1n}^{\left ( i \right ) }$, and $c_{n_1n}^{\left ( i \right ) }$ can be computed based on (\ref{coef 111}) and (\ref{coef 222}) to simplify constraints (\ref{problem 11 constrited 1}) and (\ref{problem 11 constrited 2}). The upper bound for $\zeta_n$ and $\hat{\zeta}_{n_1n}$ can be expressed as
\begin{equation}\label{2 reexpression of zetan}
  \begin{aligned}
    \zeta_n \ge c_n^{\left ( i \right ) } \prod_{m\in \mathcal{K}}\left ( L p_{mn}\theta_{mn} \right )^{a_{mn}^{\left ( i \right ) }},
  \end{aligned}
\end{equation}
\begin{equation}\label{2 reexpression of zetan1n}
  \begin{aligned}
    \hat{\zeta}_{n_1n} \ge \hat{c}_{n_1n}^{\left ( i \right ) } \prod_{m\in \mathcal{K}}\left ( L p_{mn}\theta_{mn_1} \right )^{\hat{a}_{mn_1n}^{\left ( i \right ) }}.
  \end{aligned}
\end{equation}
Subsequently, in the $i+1$-th iteration, problem (\ref{optimization problem 1}) can be reformulated as a standard GP problem as follows
\begin{subequations}\label{optimization problem 1.2}
\begin{align}
\max_{\mathbf{P}^{\left ( i+1 \right )},\pmb{\kappa}^{\left ( i+1 \right )}} \prod_{n=1}^{N} \kappa_n^{w_n^{\left ( i \right ) }}  \ \ \tag{\ref{optimization problem 1.2}}
\end{align}
\begin{alignat}{2}
\text{s.t.}\ & \left ( c_n^{\left ( i \right ) } \right )^2 \prod_{m\in \mathcal{K}}\left ( L p_{mn}\theta_{mn} \right )^{2a_{mn}^{\left ( i \right ) }} \ge \varpi_n \kappa_n,\ \forall n \in \mathcal{N}, \label{problem 1.2 constrited 1} \\
    & \left ( \hat{c}_{n_1n}^{\left ( i \right ) } \right )^2  \prod_{m\in \mathcal{K}}\left ( L p_{mn}\theta_{mn_1} \right )^{2\hat{a}_{mn_1n}^{\left ( i \right ) }} \nonumber \\
    &\ge \hat{\varpi}_{n_1 n} \kappa_n,\  \pi_n=\pi_{n'}, n\le n', \forall n,n_1 \in \mathcal{N}, \label{problem 1.2 constrited 2} \\
    &(\ref{problem 11 constrited 3}), (\ref{orignal constrited 2}), (\ref{orignal constrited 4}),
\end{alignat}
\end{subequations}
where $w_n^{\left ( i \right ) } = \frac{\eta}{\ln 2} \left ( \rho_n^{\left ( i \right ) } -a_n \hat{\rho}_n^{\left ( i \right ) } \right ) $, $\forall n \in \mathcal{N}$.
Although the GP problem is difficult to solve directly, we can convert it into a convex optimization problem by logarithmically transforming the variables. Then, the transformed problem can be effectively solved using interior-point methods to obtain a solution for the problem ($\ref{optimization problem 1.2}$). Standard GP problems can be directly solved using the CVX software toolkit \cite{grant2009cvx}.
The iterative algorithm for solving the power allocation problem with a fixed UE clustering matrix is presented in Algorithm \ref{SPA}, based on the discussion above.

\begin{algorithm}[htbp]
	\caption{SCA Based DL Power Allocation Algorithm (SPA)}\label{SPA}
	Initialize power allocation $\left \{ \mathbf{p}^{(0)}  \right \} $, SINR $\left \{ \pmb{\kappa}^{(0)}\right \}$, the OF of problem (\ref{optimization problem 11}) $\text{Obj}_{1}^{\left ( 0 \right ) }$; iteration number $i = 0$ and the upper bound $T_{out}$ and error tolerance $\xi$;

\While{$\left | \text{Obj}_1^{\left ( i+1 \right ) } - \text{Obj}_1^{\left ( i \right ) }  \right | /\text{Obj}_1^{\left ( i \right ) } \ge \xi $ and $t \le T_{out}$}{

           Solve problem (\ref{optimization problem 1.2}) with GP solver to obtain $\left \{ \mathbf{p}^{(i+1)},\pmb{\kappa}^{(i+1)} \right \} $;

           Update $\left \{ w_n^{(i+1)}, \rho_n^{(i+1)},\hat{\rho}_{n}^{(i+1)},a_{mn}^{(i+1)},c_n^{(i+1)},a_{mn_1n}^{(i+1)},c_{n_1n}^{(i+1)}  \right \}$, ${\forall m,n,n_1}$, and calculate $\text{Obj}_{1}^{\left ( i+1 \right)}$;

           Update $i = 1 + 1$;
        }
\end{algorithm}

Finally, we analyze the convergence of Algorithm \ref{SPA}. In the $i$-th iteration, we obtain the value of $\text{Obj}_{1}^{(i)}$. After optimization in the $i$-th iteration, the approximation of the OF is greater than $\text{Obj}_{1}^{(i)}$. Furthermore, since $\text{Obj}_{1}^{(i+1)}$ is greater than its approximation, we have $\text{Obj}_{1}^{(i+1)} \geq \text{Obj}_{1}^{(i)}$. Besides, it is worth noting that the OF of the problem (\ref{optimization problem 11}) has an upper bound due to the individual minimum rate requirements of URLLC UEs and power constraint at AP. Thus, Algorithm \ref{SPA} is guaranteed to converge.

\subsection{UE Clustering Matrix Design based on Graph Theory}

When the power allocation scheme is fixed, the problem of designing UE clustering matrix can be reformulated as
\begin{subequations}\label{optimization problem 2}
\begin{align}
\max_{\mathbf{X}  }\ \sum_{n\in\mathcal{N}}\hat{R}_n   \ \ \tag{\ref{optimization problem 2}}
\end{align}
\begin{alignat}{2}
\text{s.t.}\  (\ref{orignal constrited 1}), (\ref{orignal constrited 3}).
\end{alignat}
\end{subequations}

Using an exhaustive search to solve the 0-1 UE clustering problem is highly impractical \cite{Dang2022Optimal, Le2021Learning}.
Graph theory is an effective method for solving the UE clustering problem within affordable complexity and has been utilized to address power minimization problems in systems \cite{Guo2019Interference,GuoJoint2022} \footnote{The UE clustering algorithms proposed by these works focus on solving power minimization problems and cannot be used to support URLLC with the goal of maximizing ASR.}.
Inspired by this, we propose a graph theory-based algorithm to efficiently solve the UE clustering problem in NOMA-aided CFmMIMO systems, aiming to support URLLC.
Specifically, we transform the UE clustering problem into the problem of negative loop detection within a weighted directed graph. Based on the OF of the problem (\ref{optimization problem 2}) and the system clustering information, a weighted directed graph is constructed. Two negative loop search algorithms are utilized to identify negative loops within the weighted directed graph for improving the ASR.
The details of the algorithm are provided in the rest of the subsection.
For clarity of expression, we first introduce several definitions \cite{Guo2019Interference,GuoJoint2022}.

\begin{definition}\label{define4}
\textbf{\emph{(Negative loop)}} For a weighted directed graph, if there exists a cycle path starting from a specific vertex, such that traversing this cycle path in a complete loop brings you back to the same vertex, and the sum of the weights of the edges along this cycle path is less than 0, then this cycle can be referred to as a \emph{negative loop}.
\end{definition}

\begin{definition}\label{define1}
\textbf{\emph{(K-shift union)}} For $K$ URLLC UEs belonging to different clusters, denoted as $n_1, n_2, \cdots, n_K$, when the clustering matrix $\mathbf{X}$ can be transformed into $\tilde{\mathbf{X}}$ by $n_1\to n_2, n_2\to n_3,\cdots,n_{K-2}\to n_{K-1}$, and allocating URLLC UE $K-1$ to the cluster where URLLC UE $K$ is allocated, if the ASR satisfies $\sum_{n\in\mathcal{N}}\hat{R}_n(\mathbf{X}) \le \sum_{n\in\mathcal{N}}\hat{R}_n(\tilde{\mathbf{X}})$, then these $K$ URLLC UEs form a \emph{K-shift union}.
\end{definition}

To explain the notation $n\to n'$, assume that with clustering matrix $\mathbf{X}$, URLLC UE $n$ and URLLC UE $n'$ are placed in clusters $g$ and $g'$, respectively, i.e., $x_{gn}=1$ and $x_{g'n'}=1$. Then $n\to n'$ means allocate URLLC UE $n$ to cluster $g'$ and removing URLLC UE $n'$ from cluster $g'$, i.e., $x_{gn}=0$, $x_{gn'}=1$, and $x_{g'n'}=0$.

\begin{definition}\label{define2}
\textbf{\emph{(K-exchange union)}}
For $K$ URLLC UEs belonging to different clusters, denoted as $n_1, n_2, \cdots, n_K$, when the clustering matrix $\mathbf{X}$ can be transformed into $\tilde{\mathbf{X}}$ by $n_1\to n_2, n_2\to n_3,\cdots,n_{K}\to n_{1}$, if the ASR satisfies $\sum_{n\in\mathcal{N}}\hat{R}_n(\mathbf{X}) \le \sum_{n\in\mathcal{N}}\hat{R}_n(\tilde{\mathbf{X}})$, then these $K$ URLLC UEs form a \emph{K-exchange union}.
\end{definition}

Note that the \emph{exchange union} and the \emph{shift union} are different from each other. The former changes the clustering of URLLC UEs without altering the number of URLLC UEs at each cluster, while the latter modifies the number of URLLC UEs at each cluster. Based on these definitions, we provide the definition of the \emph{all-stable solution}.

\begin{definition}\label{define3}
\textbf{\emph{(All-stable solution)}}
For clustering matrix $\mathbf{X}$, if there is no \emph{shift union} or \emph{exchange union} exists with all constraints of problem (\ref{optimization problem 2}) satisfied, then the clustering matrix $\mathbf{X}$ can be called an \emph{all-stable solution}.
\end{definition}

Changing the UE clustering matrix for increasing the ASR can be transformed as a search for either a \emph{shift union} or an \emph{exchange union}.
When unions that meet the constraints of problem (\ref{optimization problem 2}) cannot be found, the problem (\ref{optimization problem 2}) is solved, culminating in the attainment of the optimal UE clustering matrix.

Searching the union among UEs can be achieved by analyzing the variation of $\sum_{n\in\mathcal{N}}\hat{R}_n(\mathbf{X})$ upon changing the UE clustering matrix $\mathbf{X}$. By decomposing the expression of $\sum_{n\in\mathcal{N}}\hat{R}_n(\mathbf{X})$ into clusters, the rate variable of cluster $g$, $\forall g \in \mathcal{G}$ can be defined as
\begin{equation}\label{rate of cluster}
\omega_g (\mathbf{X} ) = \sum_{n\in\mathcal{N}}x_{gn}\left ( \ln(1+\bar{\gamma}_n) - a_n{M(\bar{\gamma}_n) } \right ).
\end{equation}
The interdependence between the ASR and $\omega_g (\mathbf{X} )$ can be expressed as $\sum_{n\in \mathcal{N} }\hat{R}_n(\mathbf{X}) =\frac{\eta}{\ln2}  \sum_{g\in \mathcal{G} } \omega_g (\mathbf{X})$.

A weight directed graph $D\left ( \mathcal{N}_s, \varepsilon ;\mathbf{X} \right )$ is constructed based on (\ref{rate of cluster}), where $\mathcal{N}_s$ represents the set of nodes comprising the URLLC UEs, and $\varepsilon$ denotes the set of edges connecting two URLLC UEs belonging to different clusters.
Denote the adjacency matrix of the graph $D\left ( \mathcal{N}_s, \varepsilon ;\mathbf{X} \right )$ as $\mathbf{Z} = \left [ z_{ij} \right ]_{i,j\in\mathcal{N} }$, where $z_{ij}$ is given by
\begin{equation}\label{element of adjacency matrix}
  \begin{aligned}
  &z_{ij}=\\
   &\left \{
  \begin{aligned}
  &\omega_{\pi_j}(\mathbf{X} )-\omega_{\pi_j}(x_{\pi_ji}=1,x_{\pi_jj}=0,\mathbf{X}_{-i,j} ),& \pi_i \ne \pi_j,  \\
  &\infty, & \pi_i = \pi_j,\\
  \end{aligned}
  \right .
  \end{aligned}
\end{equation}
where $\mathbf{X}_{-i,j}$ represents the clustering matrix $\mathbf{X}$, with URLLC UEs $i$ and $j$ excluded. Therefore, $(x_{\pi_ji}=1,x_{\pi_jj}=0,\mathbf{X}_{-i,j} )$ represents that UE $i$ is allocated to cluster $\pi_j$, URLLC UE $j$ is not allocated to cluster $\pi_j$, and the clustering matrix of URLLC UEs excluding URLLC UEs $i$ and $j$ is consistent with $\mathbf{X}$. Then, the relationship between problem (\ref{optimization problem 2}) and graph $D\left ( \mathcal{N}_s, \varepsilon ;\mathbf{X} \right )$ can be described by the following lemma.
\begin{lemma}\label{Prop of equal}
  For $K$ URLLC UEs belonging to different clusters, denoted as $n_1,n_2,\cdots,n_K$, if these URLLC UEs form a \emph{K-exchange union}, then these URLLC UEs can form a \emph{negative loop} $n_1 \mapsto n_2 \mapsto \dots \mapsto n_K \mapsto n_1$ in the graph $D\left ( \mathcal{N}_s, \varepsilon ;\mathbf{X} \right )$.
\end{lemma}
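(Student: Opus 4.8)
The plan is to show that the total edge weight of the cycle $n_1\mapsto n_2\mapsto\cdots\mapsto n_K\mapsto n_1$ equals, up to the positive factor $\ln 2/\eta$, the negative of the ASR gain guaranteed by the $K$-exchange union, so that Definition~\ref{define2} forces that weight to be non-positive. The structural ingredient that makes this work is that, with the power allocation held fixed, the per-cluster rate $\omega_g(\mathbf{X})$ in (\ref{rate of cluster}) depends on $\mathbf{X}$ only through the UE set of cluster $g$: inspecting (\ref{bar gamma}), (\ref{bar gamma2}) and the definition of $\theta_{mn}$ following (\ref{theta}), every $\mathbf{X}$-dependent quantity entering $\bar{\gamma}_n$ for a UE $n$ with $\pi_n=g$ is built solely from $\{x_{gn'}\}_{n'\in\mathcal{N}}$ (the lone ``all-UE'' term $\sum_{n'}\sum_m p_{mn'}\beta_{mn}$ in the denominator carries no $x$ and is clustering-independent). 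Hence $\omega_g$ is a function of cluster $g$'s membership alone, and $\sum_{n\in\mathcal{N}}\hat{R}_n(\mathbf{X})=\tfrac{\eta}{\ln 2}\sum_{g\in\mathcal{G}}\omega_g(\mathbf{X})$ decomposes cluster by cluster.

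First I would set $g_k:=\pi_{n_k}$ and note that, since $n_1,\dots,n_K$ lie in pairwise distinct clusters, every consecutive pair (indices mod $K$) satisfies $\pi_{n_k}\neq\pi_{n_{k+1}}$, so by (\ref{element of adjacency matrix}) each weight $z_{n_k n_{k+1}}$ is finite and the claimed cycle is a genuine directed cycle in $D(\mathcal{N}_s,\varepsilon;\mathbf{X})$. Next I would trace the net effect of the composite move $n_1\to n_2,\ n_2\to n_3,\ \dots,\ n_K\to n_1$ defining $\tilde{\mathbf{X}}$: each special UE $n_k$ leaves $g_k$ and joins $g_{k+1}$, while all non-special UEs and all clusters outside $\{g_1,\dots,g_K\}$ are untouched; equivalently, cluster $g_{k+1}$ simply has its member $n_{k+1}$ replaced by $n_k$. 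Therefore the UE set of cluster $g_{k+1}$ under $\tilde{\mathbf{X}}$ coincides exactly with the UE set of cluster $\pi_{n_{k+1}}$ under $(x_{\pi_{n_{k+1}}n_k}=1,\,x_{\pi_{n_{k+1}}n_{k+1}}=0,\,\mathbf{X}_{-n_k,n_{k+1}})$; by membership-locality of $\omega$, these two clusterings give the same $\omega_{g_{k+1}}$, so (\ref{element of adjacency matrix}) yields $z_{n_k n_{k+1}}=\omega_{g_{k+1}}(\mathbf{X})-\omega_{g_{k+1}}(\tilde{\mathbf{X}})$.

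Summing over the cycle, reindexing $k+1\mapsto k$ (a bijection of $\{1,\dots,K\}$), and using that only $g_1,\dots,g_K$ change between $\mathbf{X}$ and $\tilde{\mathbf{X}}$, gives
\begin{align*}
\sum_{k=1}^{K} z_{n_k n_{k+1}} &= \sum_{k=1}^{K}\bigl(\omega_{g_k}(\mathbf{X})-\omega_{g_k}(\tilde{\mathbf{X}})\bigr) \\
&= \sum_{g\in\mathcal{G}}\bigl(\omega_g(\mathbf{X})-\omega_g(\tilde{\mathbf{X}})\bigr) \\
&= \frac{\ln 2}{\eta}\Bigl(\sum_{n\in\mathcal{N}}\hat{R}_n(\mathbf{X})-\sum_{n\in\mathcal{N}}\hat{R}_n(\tilde{\mathbf{X}})\Bigr)\le 0,
\end{align*}
where the final inequality is precisely the defining property of a $K$-exchange union (Definition~\ref{define2}). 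Thus the cycle $n_1\mapsto\cdots\mapsto n_K\mapsto n_1$ has non-positive total weight, i.e.\ it is a negative loop in $D(\mathcal{N}_s,\varepsilon;\mathbf{X})$ (strictly negative whenever the exchange strictly increases the ASR), which is the claim.

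I expect the crux to be the second step: one must argue carefully that the \emph{local} re-clustering hidden inside the definition of $z_{ij}$ in (\ref{element of adjacency matrix}) --- which rewrites only cluster $\pi_j$ --- produces the same $\omega_{\pi_j}$ as the \emph{global} clustering $\tilde{\mathbf{X}}$, even though $\tilde{\mathbf{X}}$ simultaneously rewrites all $K$ clusters. This is exactly where membership-locality of $\omega_g$ is indispensable, and it is also what makes the per-cluster sum telescope without cross terms. The only other point needing explicit care is the pairwise distinctness of $g_1,\dots,g_K$, which guarantees both that no edge weight is $\infty$ and that the $K$ moves act on $K$ disjoint clusters, so that $\tilde{\mathbf{X}}$ restricted to $g_{k+1}$ really is ``replace $n_{k+1}$ by $n_k$'' with no interference from the other moves.
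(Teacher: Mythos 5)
Your proposal is correct and follows essentially the same route as the paper: both decompose the ASR change induced by the cyclic move $n_1\to n_2\to\cdots\to n_K\to n_1$ cluster by cluster and identify it, up to the factor $\eta/\ln 2$, with the sum of the edge weights $z_{n_k n_{k+1}}$ along the cycle, so that Definition~\ref{define2} forces the cycle weight to be non-positive. Your treatment is in fact somewhat more careful than the paper's, since you make explicit the membership-locality of $\omega_g$ that justifies equating the local re-clustering in (\ref{element of adjacency matrix}) with the restriction of the global matrix $\tilde{\mathbf{X}}$, a step the paper asserts without elaboration.
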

\begin{proof}
  Denote the UE clustering matrix before and after $n_1 \to n_2\to \dots \to n_K \to n_1$ as $\mathbf{X}$ and $\tilde{\mathbf{X}}$, respectively, then the difference of ASR is given by
  \begin{equation}\label{deltaR}
    \begin{aligned}
    &\Delta \hat{R} =\sum_{n\in \mathcal{N}}\hat{R}_n(\mathbf{X})-\sum_{n\in \mathcal{N}}\hat{R}_n(\tilde {\mathbf{X}})\\
    &= \sum_{n\in \mathcal{N}} \frac{\eta}{\ln2}  \left ( \ln(1+\bar{\gamma}_n\left (\mathbf{X}\right ) ) - a_nM(\bar{\gamma}_n ({\mathbf{X}} ))  \right )\\
    & - \sum_{n\in \mathcal{N}}\frac{\eta}{\ln2}\left ( \ln(1+\bar{\gamma}_n\left (\tilde{\mathbf{X}}\right )) - a_nM(\bar{\gamma}_n (\tilde{\mathbf{X}} ))  \right ).
    \end{aligned}
  \end{equation}

  Assuming that URLLC UEs $n_i$ is allocated to clusters $\pi_i$, $i = 1,2,\cdots,K$. When a URLLC UE enters a new cluster, the ranking of URLLC UEs in this cluster is reevaluated, causing the SINR and ergodic rate of each URLLC UE to be recalculated. However, the SINR of URLLC UEs outside of these $K$ clusters is not affected, and their ergodic rate remainS unchanged by $n_1 \to n_2\to \dots \to n_K \to n_1$. Based on (\ref{deltaR}), the relation between $\Delta \hat{R}$ and $\sum_{k=1}^{K}z_{kj}$  can be expressed as
  \begin{equation}
    \begin{aligned}
    \Delta \hat{R} =& {\textstyle \sum_{k=1}^{K}} {\textstyle \sum_{n=1}^{N}} x_{\pi_kn}\left ( \eta\log_2(1+\bar{\gamma}_n\left (\mathbf{X}\right ))  \right .\\
    &\left .   -a_n\sqrt{{\eta V(\bar{\gamma}_n(\tilde{\mathbf{X}} ))} }\log_2e  \right )\\
    &- {\textstyle \sum_{k=1}^{K}} {\textstyle \sum_{n=1}^{N}}\tilde {x}_{\pi_kn}\left ( \eta\log_2(1+\bar{\gamma}_n\left (\mathbf{X}\right )) \right . \\
    &\left . - a_n\sqrt{{\eta V(\bar{\gamma}_n(\tilde{\mathbf{X}} ))} }\log_2e  \right )\\
    =&\frac{\eta}{\ln 2} {\textstyle \sum_{k=1}^{K}}z_{kj},
    \end{aligned}
  \end{equation}
  where $ j = \mod(k,K)+1 $. Thus, a \emph{K-exchange union} corresponds to a \emph{negative loop} in the graph $D\left ( \mathcal{N}_s, \varepsilon ;\mathbf{X} \right )$.
\end{proof}

Based on Lemma \ref{Prop of equal}, the ASR can increase when finding \emph{exchange union} and updating the clustering matrix.
However, the number of URLLC UEs in each cluster remains constant after the cluster changing process $n_1 \to n_2\to \dots \to n_K \to n_1$. To overcome this limitation, we extend the weight directed graph $D\left ( \mathcal{N}_s, \varepsilon ;\mathbf{X} \right )$ to $D\left ( \mathcal{N}^e, \varepsilon ;\mathbf{X} \right )$ by introducing virtual URLLC UEs to each cluster. Specifically, a virtual URLLC UE, labeled as $n_g^v$, is added to cluster $g$ to allow for arbitrary changes in the number of URLLC UEs at each cluster.
These virtual URLLC UEs are assigned an ergodic data rate of 0 and do not receive any power allocation, nor do they participate in the sorting process. Thus, the presence of virtual URLLC UEs does not affect the ergodic rate of real URLLC UEs. Updating clustering matrix $\mathbf{X}$ to $\tilde{\mathbf{X}}$ by adding URLLC UE $n_{K-1}$ to the cluster of URLLC UE $n_K$ after $n_1\to n_2,\ldots,n_{K-2}\to n_{K-1}$ can be achieved by $n_1\to n_2,\ldots,n_{K-2}\to n_{K-1},n_{K-1}\to n_{\pi_K}^v, n_{\pi_K}^v\to n_1$. As a result, all \emph{shift unions} can be converted into \emph{exchange unions}.
For the convenience of description, the \emph{negative loop} with all URLLC UEs in different clusters is called the \emph{negative differ-cluster loop}.
The theorem about the \emph{all-stable solution} and the \emph{negative differ-cluster loop} is presented in the following.

\begin{theorem}\label{therom1}
  A clustering matrix $\mathbf{X}$ is considered an \emph{all-stable solution} if there is no \emph{negative differ-cluster loop} that can increase the ASR while satisfying all the constraints of the problem (\ref{optimization problem 2}).
\end{theorem}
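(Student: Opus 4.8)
The plan is to prove the sharper statement that $\mathbf{X}$ is an \emph{all-stable solution} \textbf{if and only if} the extended graph $D\left(\mathcal{N}^e,\varepsilon;\mathbf{X}\right)$ contains no \emph{negative differ-cluster loop} whose induced clustering update still satisfies all constraints of (\ref{optimization problem 2}); the ``if'' direction asserted in Theorem \ref{therom1} is then just one half of this equivalence. By Definition \ref{define3}, non-stability of $\mathbf{X}$ means that \emph{either} a \emph{shift union} \emph{or} an \emph{exchange union} exists (subject to feasibility), so I would first treat exchange unions, then reduce shift unions to exchange unions via the virtual-UE construction, and finally reassemble the pieces and take the contrapositive.

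\textbf{Step 1: exchange unions $\leftrightarrow$ negative differ-cluster loops.} Lemma \ref{Prop of equal} already supplies one direction: any $K$-exchange union yields a negative loop $n_1\mapsto\cdots\mapsto n_K\mapsto n_1$, and since $z_{ij}=\infty$ whenever $\pi_i=\pi_j$ by (\ref{element of adjacency matrix}), every finite-weight cycle automatically visits pairwise-distinct clusters, hence is a differ-cluster loop. For the converse I would reuse the identity $\Delta\hat R=\frac{\eta}{\ln 2}\sum_{k=1}^{K}z_{kj}$ with $j=\mod(k,K)+1$ established inside the proof of Lemma \ref{Prop of equal} (cf.\ (\ref{deltaR})): a negative loop gives $\sum_k z_{kj}<0$, hence $\sum_{n\in\mathcal{N}}\hat R_n(\tilde{\mathbf X})>\sum_{n\in\mathcal{N}}\hat R_n(\mathbf X)$, so the $K$ UEs on the loop form an ASR-increasing exchange union. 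Thus ``an exchange union exists'' is equivalent to ``a negative differ-cluster loop exists'' in $D\left(\mathcal{N}_s,\varepsilon;\mathbf{X}\right)$.

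\textbf{Step 2: shift unions $\leftrightarrow$ exchange unions, then assembly.} Here I would invoke the extension to $D\left(\mathcal{N}^e,\varepsilon;\mathbf{X}\right)$: attaching to each cluster $g$ a virtual UE $n_g^v$ with zero rate, no allocated power, and excluded from the ordering (\ref{rank condition}) leaves $\omega_g$ in (\ref{rate of cluster}) and every real UE's $\bar\gamma_n$ (hence every $\hat R_n$) unchanged, so the restriction of the extended adjacency matrix to real UEs equals $\mathbf{Z}$. A $K$-shift union $n_1\to n_2,\ldots,n_{K-2}\to n_{K-1}$ that places $n_{K-1}$ into the cluster of $n_K$ is realized by the cycle $n_1\to\cdots\to n_{K-1}\to n_{\pi_K}^v\to n_1$, a $K$-exchange union in the extended graph producing the same $\tilde{\mathbf X}$ on real UEs and the same $\Delta\hat R$; conversely, deleting the virtual node from any exchange cycle that uses one recovers a shift union. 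Hence in $D\left(\mathcal{N}^e,\varepsilon;\mathbf{X}\right)$ the existence of a shift \emph{or} exchange union is equivalent to the existence of an exchange union, which by Step 1 is equivalent to the existence of an ASR-increasing negative differ-cluster loop. Combining, $\mathbf{X}$ fails to be all-stable $\iff$ such a loop exists, where admissibility of an edge/loop means precisely that the resulting $\tilde{\mathbf X}$ still obeys (\ref{orignal constrited 1}) and (\ref{orignal constrited 3}) --- the latter being automatic for any shift/exchange operation and the former enforced exactly as in Definition \ref{define3}. Taking the contrapositive of ``$\Rightarrow$'' yields Theorem \ref{therom1}.

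The main obstacle I anticipate is making the virtual-UE reduction airtight: one must verify directly from (\ref{bar gamma})--(\ref{bar gamma2}) that inserting a zero-power, zero-rate node that is skipped in the ordering (\ref{rank condition}) leaves every real UE's numerator and all interference terms --- intra-cluster, post-SIC, and residual --- literally unchanged, and one must confirm that the correspondence between feasible shift moves and feasible exchange cycles passing through a single virtual node is a bijection, so that no improving union is lost or spuriously created by the extension. The rest is bookkeeping on top of Lemma \ref{Prop of equal}.
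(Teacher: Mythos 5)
Your proposal is correct and follows essentially the same route as the paper: invoke Lemma \ref{Prop of equal} to map exchange unions to negative loops, use the virtual-UE extension to reduce shift unions to exchange unions, and take the contrapositive; the extra converse direction you prove (negative loop $\Rightarrow$ union) is not needed for the stated ``if'' direction. One small caveat: the adjacency matrix (\ref{element of adjacency matrix}) only forces \emph{consecutive} nodes of a finite-weight cycle into distinct clusters, so a cycle of length $\ge 4$ need not be differ-cluster in general --- but this does not affect the argument here, since the loops arising from unions involve $K$ UEs in pairwise-distinct clusters by definition.
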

\begin{proof}
Based on Lemma \ref{Prop of equal} and the fact that all \emph{shift unions} can be converted into \emph{exchange unions}, we can infer that the presence of either a \emph{shift union} or an \emph{exchange union} among all the URLLC UEs, including the virtual ones, necessarily implies the existence of a \emph{negative differ-cluster loop} in the graph $D\left ( \mathcal{N}^e, \varepsilon ;\mathbf{X} \right )$.
Therefore, if no \emph{negative differ-cluster loop} which can increase the OF of the problem (\ref{optimization problem 2}) exists, it follows that there are on \emph{shift union} or \emph{exchange union} which can increase the ASR while satisfying all the constraints.
Consequently, the clustering matrix $\mathbf{X}$ is considered an \emph{all-stable solution}.
\end{proof}

\begin{figure*}[ht]
	\centering
	\includegraphics[scale=0.8]{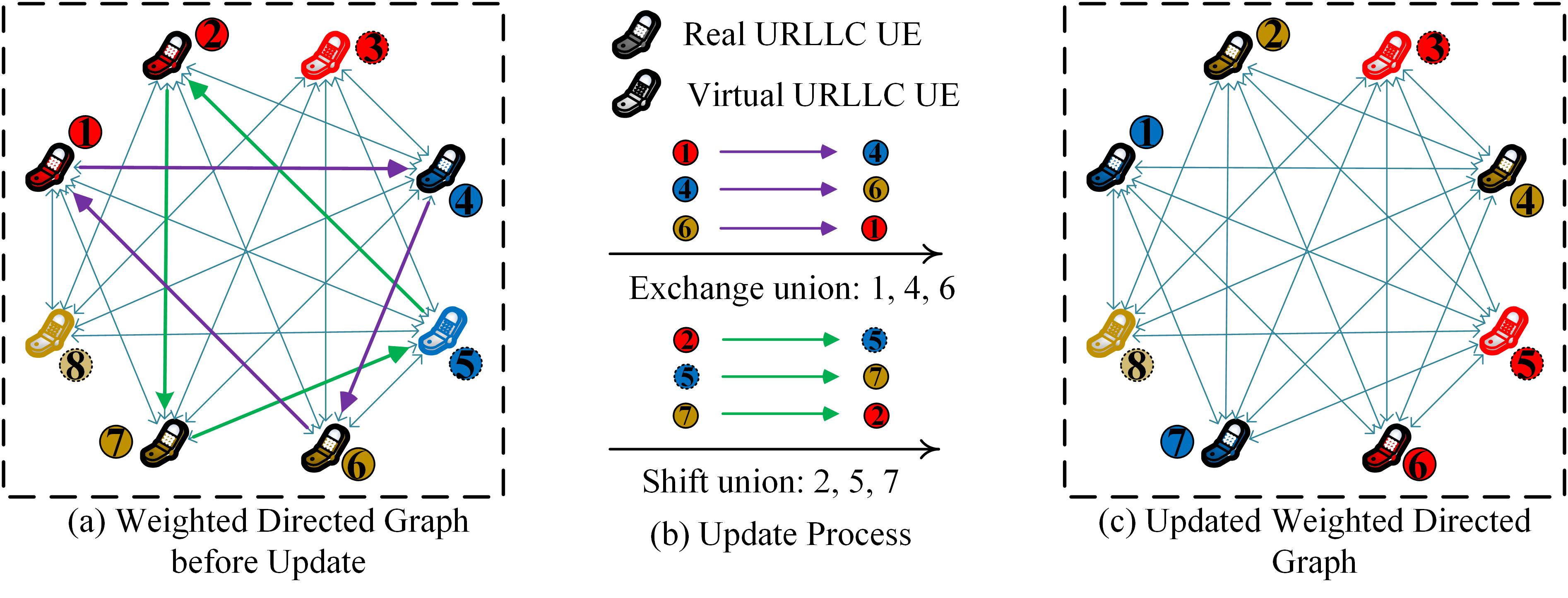}
	\caption{Illustration of the weighted directed graph, \emph{shift league}, and \emph{exchange league} in NOMA-aided CFmMIMO system.}
	\label{fig:digraph}
\end{figure*}

To provide a clearer explanation of the concepts of the weighted directed graph, \emph{shift union}, and \emph{exchange union}, an example of the weighted directed graph within a NOMA-aided CFmMIMO system is illustrated in Fig. \ref{fig:digraph}.
The system comprises five real URLLC UEs,  which are allocated to three clusters distinguished by different colors, as illustrated in Fig. \ref{fig:digraph}(a).
Besides, each cluster also contains a virtual URLLC UE, which can be used to convert \emph{shift unions} to \emph{exchange unions}.
The weighted directed graph within the NOMA-aided CFmMIMO system comprises a set of nodes and edges. These nodes encompass both real URLLC UEs and virtual URLLC UEs. Note that edges in the graph exclusively exist between URLLC UEs allocated to different clusters, denoted by nodes with differing colors as illustrated.
Within this weighted directed graph, two \emph{negative differ-cluster loops}, denoted as \ding{172}$\mapsto$\ding{175}$\mapsto$\ding{177}$\mapsto$\ding{172} and \ding{173}$\mapsto$\ding{176}$\mapsto$\ding{178}$\mapsto$\ding{173}, have been searched. Among them, URLLC UEs \ding{172}, \ding{175}, and \ding{177} form an \emph{exchange union}, as they are all real URLLC UEs. URLLC UEs \ding{173}, \ding{176}, and \ding{178} form a shift union since the URLLC UE \ding{176} is a virtual URLLC UE.
Update the weighted directed graph through \ding{172}$\to$\ding{175}, \ding{175}$\to$\ding{177}, \ding{177}$\to$\ding{172} and \ding{173}$\to$\ding{176}, \ding{176}$\to$\ding{178}, \ding{178}$\to$\ding{173}. URLLC UEs can be allocated to the corresponding cluster, as shown in Fig. \ref{fig:digraph}(c).
Note that the initial cluster scheme is considered an \emph{all-stable solution} if there is no \emph{shift union} or \emph{exchange union} in the system.

Based on Theorem \ref{therom1}, the ASR can be increased by identifying \emph{negative differ-cluster loops} in the graph $D\left ( \mathcal{N}^e, \varepsilon ;\mathbf{X} \right )$ and updating the clustering matrix until no \emph{negative differ-cluster loop} can be searched in the graph $D\left ( \mathcal{N}^e, \varepsilon ;\mathbf{X} \right )$.
To achieve this objective, an algorithm based on graph theory is proposed, which is given in Algorithm \ref{GBUA}.
The UE clustering algorithm iterates continuously until it converges.
In each iteration, we utilize a negative loop detection algorithm to identify \emph{negative differ-cluster loops}. Those loops that do not satisfy UE QoS constraints are placed in the invalid loop set, while the loops that satisfy the constraints are used to update the UE clustering matrix until no such negative loops can be searched in the graph $D\left ( \mathcal{N}^e, \varepsilon ;\mathbf{X} \right )$.
\begin{algorithm}[htbp]
	\caption{Graph Theory Based UE Clustering Matrix Design Algorithm}\label{GBUA}
	\KwIn{UE clustering matrix design problem (\ref{optimization problem 2}), UE clustering matrix $\mathbf{X}^{(k)}$, Invalid loop set $\mathcal{S} = \phi$.}
	\KwOut{UE clustering matrix $\mathbf{X}^{(k+1)}$.}
		
	\Repeat{Cannot find an appropriate \emph{negative differ-cluster loop}}{
		Create graph  $D\left ( \mathcal{N}^e, \varepsilon ;\mathbf{X} \right )$;

        Calculate the adjacent matrix $\mathbf{Z}=(z_{ij})_{i,j\in\mathcal{N}^e }$ of graph $D\left ( \mathcal{N}^e, \varepsilon ;\mathbf{X} \right )$ according to (\ref{element of adjacency matrix});

        Search for the \emph{negative differ-cluster loop} in graph $D\left ( \mathcal{N}^e, \varepsilon ;\mathbf{X} \right )$;

        \If{\emph{Negative differ-cluster loop} $\mathcal{L}$ not in $\mathcal{S}$}{

            \If{constraints of problem (\ref{optimization problem 2}) is not satisfied when using $\mathbf{X}^{(k+1)}$}{
                Add $\mathcal{L}$ to $\mathcal{S}$;

                Go bake to step $4$;
            }

            Change clustering matrix $\mathbf{X}^{(k)}$ to $\mathbf{X}^{(k+1)}$ according to loop $\mathcal{L}$;

            Reorder URLLC UEs in $\mathcal{N}$ according to new clustering matrix $\mathbf{X}^{(k+1)}$ by (\ref{rank condition});

            Update $\mathbf{X}^{(k)}=\mathbf{X}^{(k+1)}$;
        }
        Update $\mathcal{S} = \phi$;
	}
\end{algorithm}

In the fourth step of Algorithm \ref{GBUA}, we utilize two algorithms to identify \emph{negative differ-cluster loop}s \cite{Guo2019Interference, GuoJoint2022}. The first algorithm, known as the extended bellman-ford algorithm (EBFA), involves introducing a super node into the graph and connecting it to all nodes in the set $\mathcal{N}_s$. Then, it seeks the shortest path from the super node to all other nodes through relaxation, continuing until no further path can be relaxed. Note that two different URLLC UEs within the same cluster are not present in the same path during relaxation.
The 10th step of EBFA in \cite{Guo2019Interference} requires continuous recursive scheduling of itself. Thus, its computational complexity increases exponentially with the scale of the graph, i.e., the number of URLLC UEs and clusters, which is clearly not conducive to supporting URLLC.
To address this challenge, a second polynomial-time greedy-based suboptimal algorithm (GSA) is used to search for \emph{negative differ-cluster loops} in the graph.
GSA identifies the smallest edge $z_{n_1n_2}$ in $\mathbf{Z}$ and repeats the process until either reaching the maximum number of iteration steps or failing to search for any \emph{negative differ-cluster loops}.
To balance the accuracy and efficiency of GSA, the coefficient $\alpha$ can be utilized to control the number of iterations.

Finally, we analyze the convergence of Algorithm \ref{GBUA}. The number of nodes in the graph is limited by the number of URLLC UEs and clusters, resulting in a finite number of \emph{negative differ-cluster loops} in the graph. Besides, at each iteration of Algorithm \ref{GBUA}, the OF of problem (\ref{optimization problem 2}) increases, while ensuring the constraints of problem (\ref{optimization problem 2}) are satisfied. Consequently, Algorithm \ref{GBUA} can terminate after a finite number of iterations and the convergence of Algorithm \ref{GBUA} is guaranteed.

\subsection{Globe ASR Maximization}
The proposed algorithm for maximizing the global ASR, which integrates power allocation and UE clustering matrix design, is given in Algorithm \ref{alg:5}. As discussed earlier in this section, the algorithm iteratively and alternately solves for $\mathbf{p}$ and $\mathbf{X}$ until a stable optimal ASR is attained.

\begin{algorithm}[htbp]
    \caption{Two-Step Alternating Optimization based Global ASR Maximization Algorithm}\label{alg:5}
    \KwIn{$\mathbf{p}$, $\mathbf{X}$, $M$, $N$, $G$, $L$, $\beta$, error tolerance $\xi$, iteration number $t=1$ and the upper bound $T_{out}$; a feasible solution $\mathbf{p}^{(0)}$ and $\mathbf{X}^{(0)}$ of problem (\ref{orignal optimization problem});}
	\KwOut{The optimal $\mathbf{p}^{(t)}$ and $\mathbf{X}^{(t)}$;}

    \Repeat{$\left | Obj^{(t)} - Obj^{(t-1)} \right | \le  \xi $ or $t \ge T_{out}$}{
        Obtain $\mathbf{p}^{(t)}$ with fixed $\mathbf{X}^{(t-1)}$ based on Algorithm \ref{SPA};

        Update $\mathbf{X}^{(t)}$ with fixed $\mathbf{p}^{(t)}$ based on Algorithm \ref{GBUA};

        Update $t=t+1$;
    }
\end{algorithm}

\textbf{Computational Complexity Analysis:} The SDPT3 optimizer of CVX is employed to solve GP problem (\ref{optimization problem 1.2}) with $(M+1)N$ optimization variables, resulting in a complexity of $O\left(\left ( (M+1)N \right )^3\right)$. Assuming that steps $2$-$6$ in Algorithm \ref{SPA} iterate $\tau_1$ times, the computational complexity of SPA is $O\left(\tau_1  \left ( (M+1)N \right )^3\right)$.
The computational complexity of GSA is $O\left(G (G+N)^2\right)$ \cite{Guo2019Interference}.
Assume GSA iterates $\tau_2$ times, thus the computational complexity of Algorithm \ref{GBUA} is $O\left(\tau_2 G (G+N)^2\right)$.
If Algorithm \ref{alg:5} iterates $\tau$ times, its computational complexity can be approximated as $O\left(\tau \max \left(\tau_1  \left ( (M+1)N \right )^3 , \tau_2  G  (G+N)^2\right) \right)$.

\section{Numerical Results and Discussion}
In this section, extensive numerical results are provided to validate the effectiveness of our proposed algorithms. We begin by verifying the convergence and the complexity of our proposed algorithms. Subsequently, Monte Carlo simulations are provided to demonstrate the close approximation between the LB of the ASR and the actual. Finally, we compare the performance of the proposed algorithms with benchmark algorithms across various scenarios.

\subsection{Simulation Setup and Comparison Algorithms}
In our simulations, we consider a randomly distributed system of APs and URLLC UEs within a rectangular area of $1km \times 1km$. The large-scale fading coefficient, which is influenced by path loss and shadowing effects, is denoted as $\beta_{mn} = PL_{mn} + z_{mn}$, where $PL_{mn}$ represents the path loss component and $z_{mn} \sim \mathcal{CN}(0,\delta_{sh}^2)$ represents the shadowing component following a complex Gaussian distribution with zero mean and variance $\delta_{sh}^2$. We apply a three-slope model proposed in \cite{Ngo2017Cell} to characterize the path loss and use the same parameter settings as in \cite{Ngo2017Cell}. The length of the pilot signal is set to $G$.
To provide a more intuitive representation of our proposed algorithms, we refer to them as S-EBFA and S-GSA, respectively. Both algorithms are based on SCA to solve the power allocation problem. However, when it comes to the UE clustering problem, S-EBFA employs EBFA, while S-GSA utilizes GSA for \emph{negative differ-cluster loop} detection.

{\color{blue}
Unless otherwise specified, the following parameter settings are adopted in the simulations. The number of APs $M$ is set to 120, and the number of URLLC UEs $N$ is set to 40. \footnote{{\color{blue}To ensure that the NOMA-aided CFmMIMO system has sufficient spatial degrees of freedom to support URLLC \cite{Ngo2017Cell, 9650567}, we consider deploying far more APs than the number of URLLC UEs to jointly provide services \cite{ZhangSWIPT2023, Dang2022Optimal, Le2021Learning}.}}
}
The number of antennas each AP $L$ and clusters $G$ is set as $12$ and $N/2$, respectively.
The system bandwidth is set as $10$ MHz, and the noise power spectral density is $-174$ dBm/Hz.
The channel coherent length $\tau_c = 200$, and the decoding error probability is set to $\epsilon = 10^{-6}$. The minimum transmission rate is restricted to $1$ Mbps, while the maximum DL transmission power is set to $23$ dBm. The pilot power is maintained at $20$ dBm, and the SIC coefficient is $c=0.5$.

To validate the effectiveness of our proposed algorithms, we compare their performance against the following benchmark algorithms:

\begin{itemize}	
	\item  \textbf{Gale-Shapley:} In Gale-Shapley algorithm\cite{Xu2018Joint}, each URLLC UE prefers the cluster which has less intra-cluster interference and each cluster prefers the URLLC UE that has the smaller large-scale fading coefficient. Note that the number of URLLC UEs in each cluster does not exceed $\left \lceil \frac{N}{G}  \right \rceil $.

	\item \textbf{BRPA:} Basic random URLLC UE clustering with power allocation optimization algorithm (BRPA) randomly allocates URLLC UEs to different clusters and only optimizes power allocation.
\end{itemize}

\subsection{Convergence and Complexity Analysis}\label{convergence}
\begin{figure*}
	\centering
	\subfloat[]{
	\includegraphics[width=0.32\textwidth]{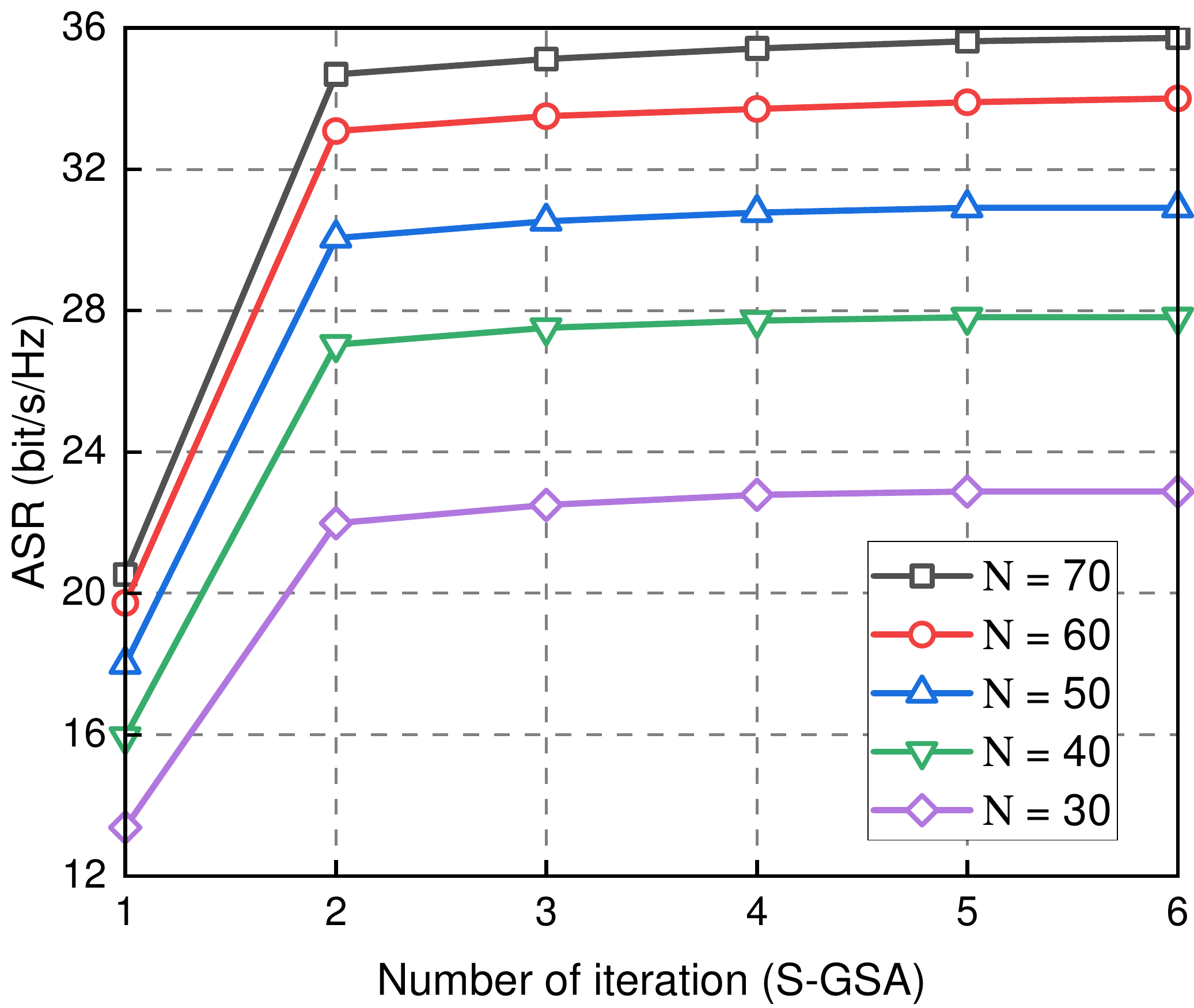}}
    \subfloat[]{
	\includegraphics[width=0.32\textwidth]{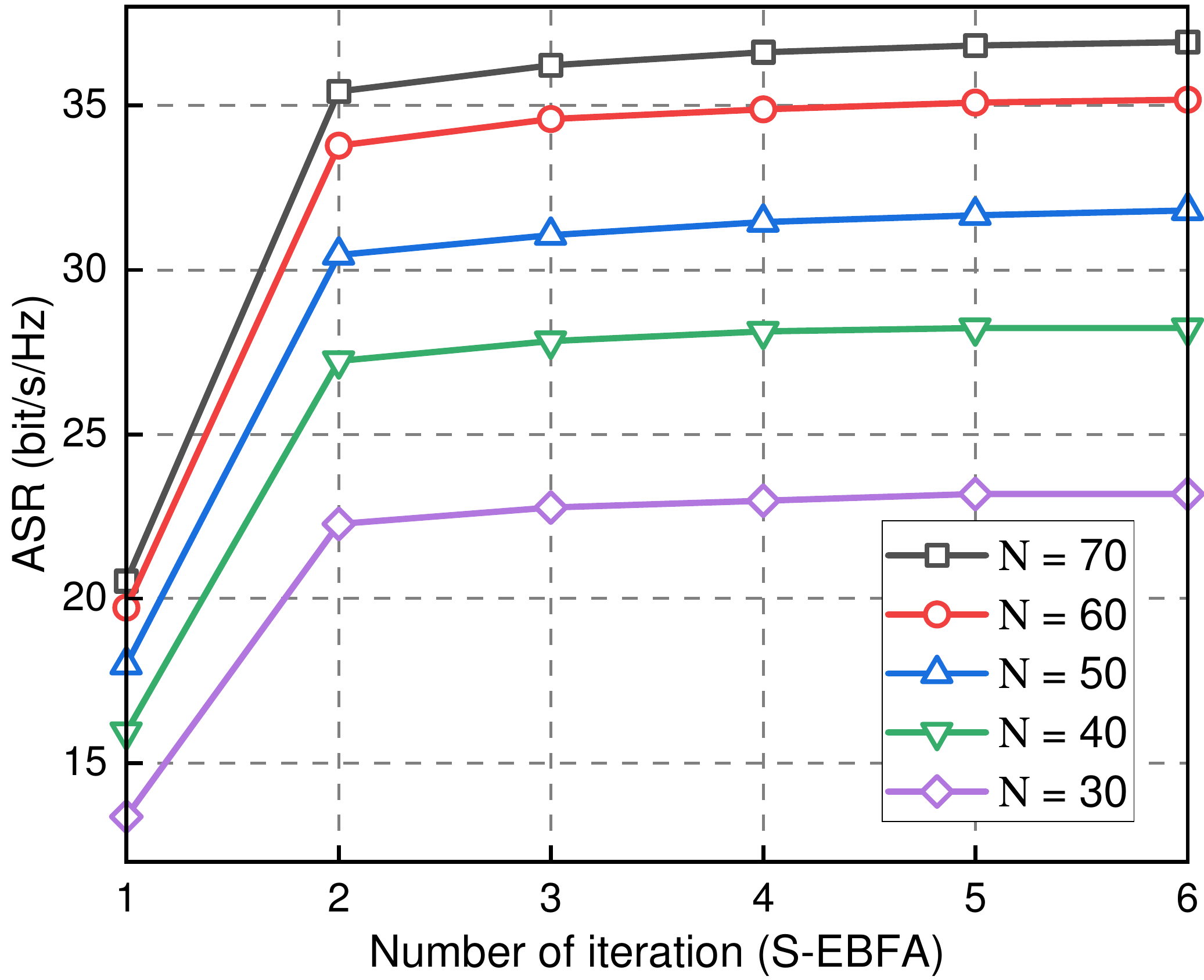}}
	\subfloat[]{
	\includegraphics[width=0.32\textwidth]{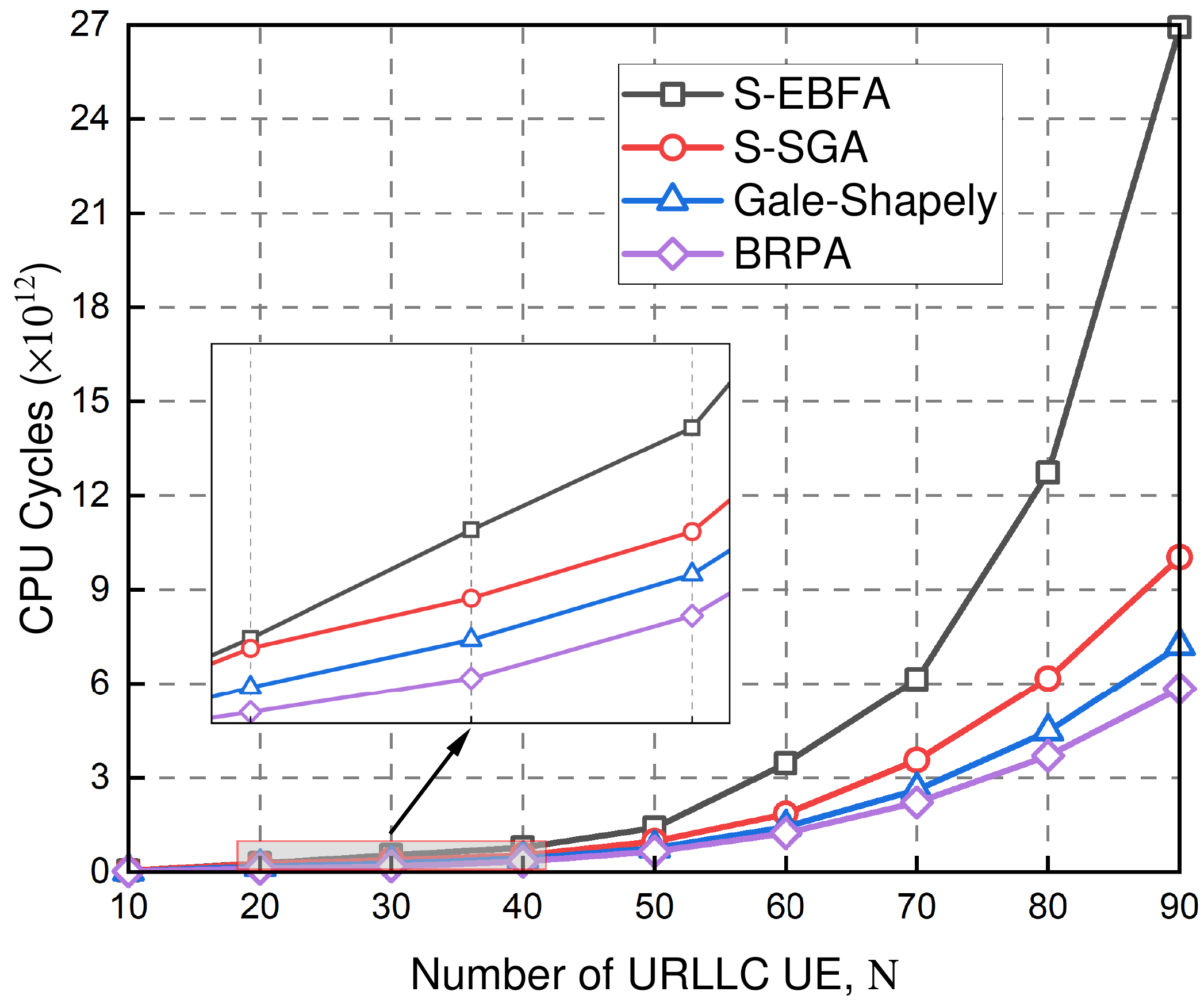}}
    \caption{Convergence and complexity analysis of the proposed algorithm: a. Convergence of S-GSA vs. number of URLLC UEs; b. Convergence of S-EBFA vs. number of URLLC UEs; c. The performance of algorithm complexity vs. the number of URLLC UEs.}
	\label{conCom}
\end{figure*}

We investigate the convergence of S-EBFA and S-GSA under varying numbers of URLLC UEs, as illustrated in Fig. \ref{conCom} (a) and \ref{conCom} (b), respectively.
The results indicate that both proposed algorithms can converge after 5-6 iterations in various scenarios.

In order to explore the complexity performance of the algorithm, we compared the CPU cycles of the algorithms under different number of URLLC UEs, as shown in Fig. \ref{conCom} (c). CPU cycle is defined by algorithm runtime and CPU operation frequency, which can be used as an indicator of algorithm complexity. It can be seen that the complexity of our proposed algorithm increases with the increase in the number of URLLC UEs. When the number of URLLC UEs is small, the complexity of the S-EBFA algorithm is not significantly different from other algorithms. However, with the increase of URLLC UE, the complexity of the S-EBFA algorithm rapidly increases, which is caused by the exponential complexity of the S-EBFA algorithm. Compared to the S-EBFA algorithm, the complexity of the S-GSA algorithm increases significantly lower with the number of URLLC UEs. The numerical results demonstrate that the S-GSA algorithm with polynomial complexity has a more significant advantage in algorithm complexity compared to the S-EBFA algorithm with exponential complexity, as the number of URLLC UEs increases. In addition, the algorithmic complexity of the Gale Shapely algorithm, Hungarian algorithm, and BRPA algorithm are not significantly different in various cases. Compared to the benchmark algorithms, the S-GSA algorithm has a higher complexity, but due to its polynomial complexity, its complexity does not grow too fast like the S-EBFA algorithm.

\subsection{Performance Analysis}
\begin{figure}[ht]
	\centering
    \setlength{\abovecaptionskip}{-0.1cm}   
	\includegraphics[scale=0.33]{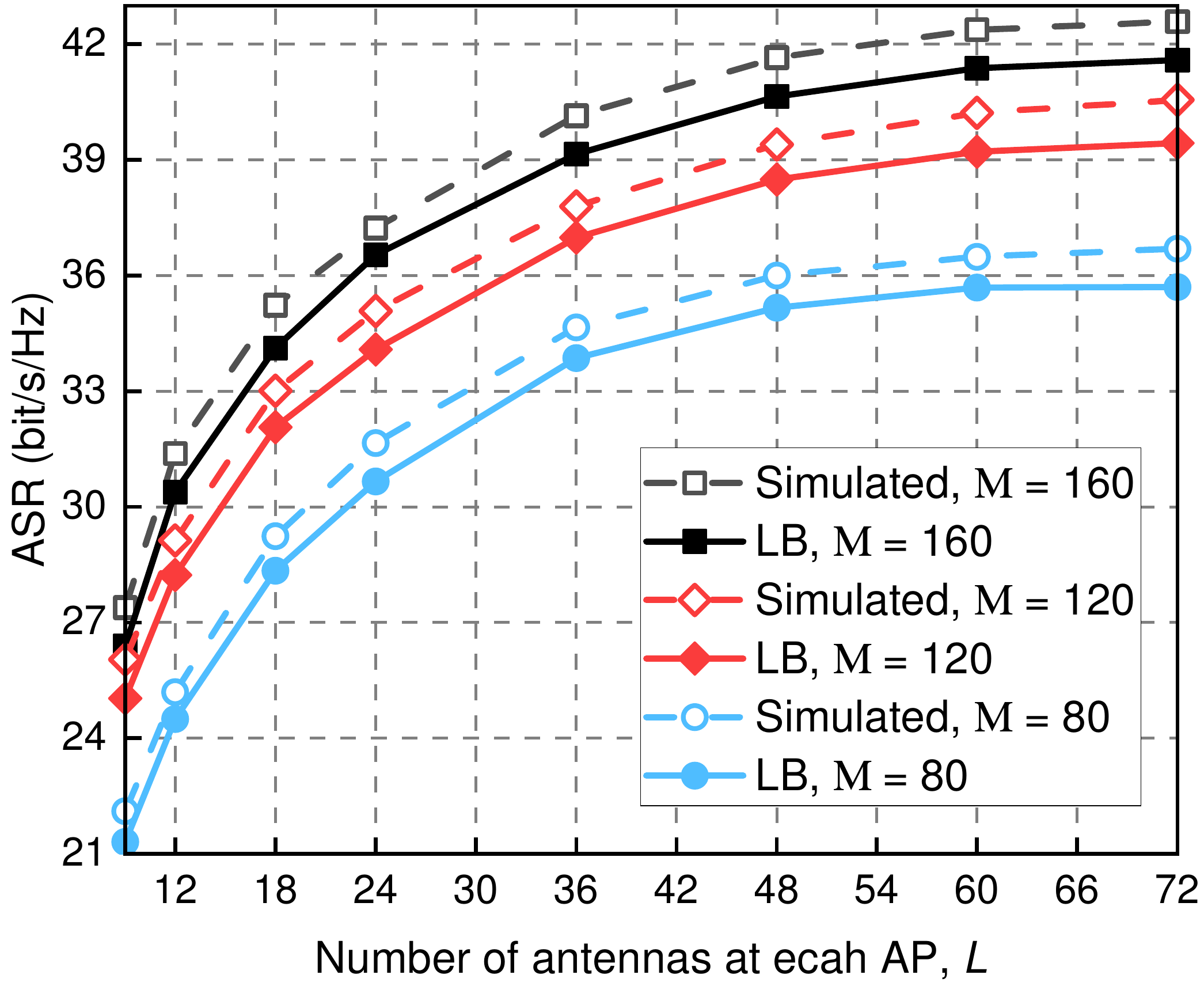}
	\caption{Tightness of derived data rate LB vs. the number of APs.}
	\label{tightnessOfLBimag}
\end{figure}

\begin{figure}[ht]
	\centering
    \setlength{\abovecaptionskip}{-0.1cm}   
	\includegraphics[scale=0.33]{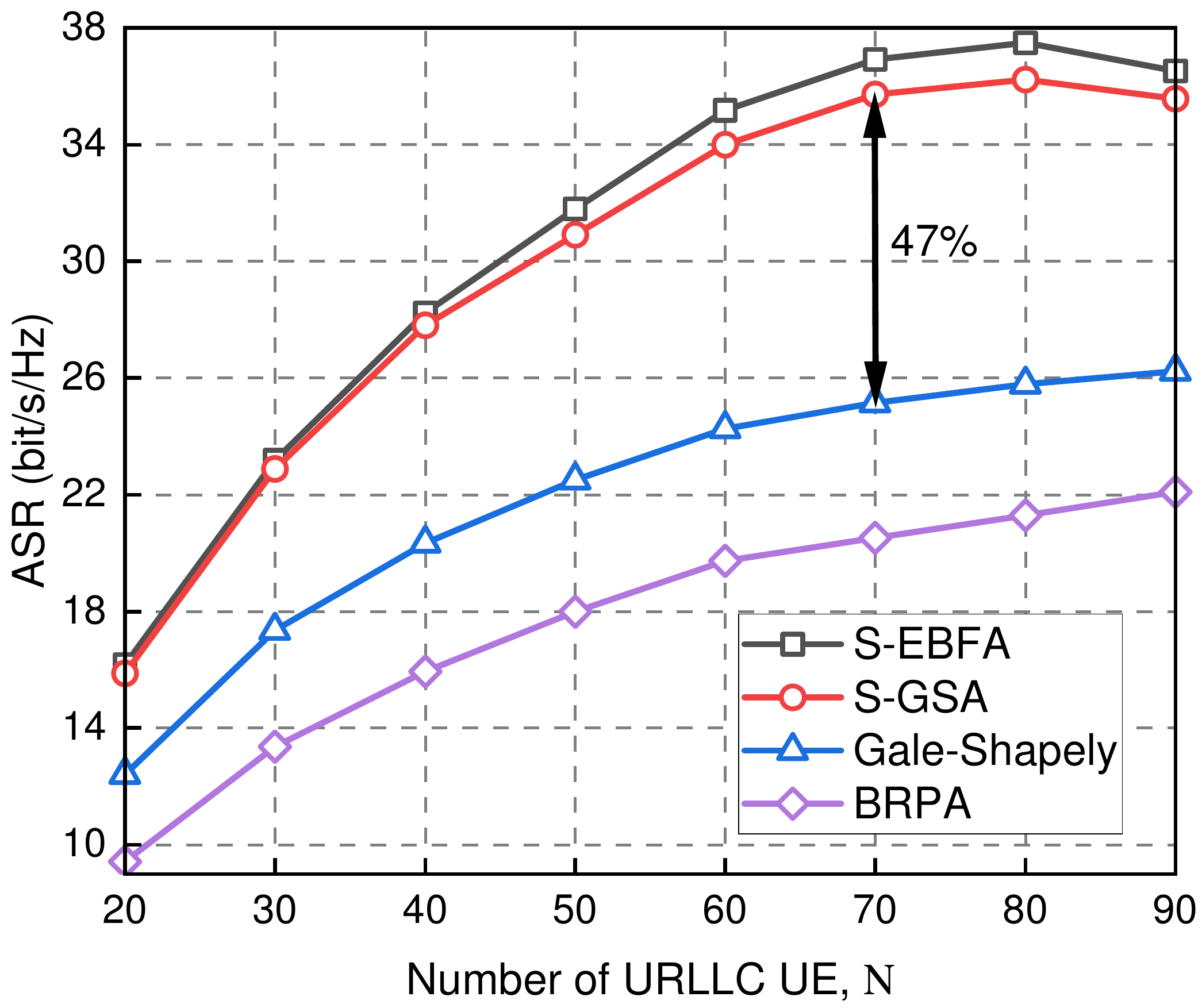}
    \caption{ASR vs. the number of URLLC UEs.}
    \label{differ_users}
\end{figure}

\begin{figure}[ht]
	\centering
    \setlength{\abovecaptionskip}{-0.1cm}   
	\includegraphics[scale=0.33]{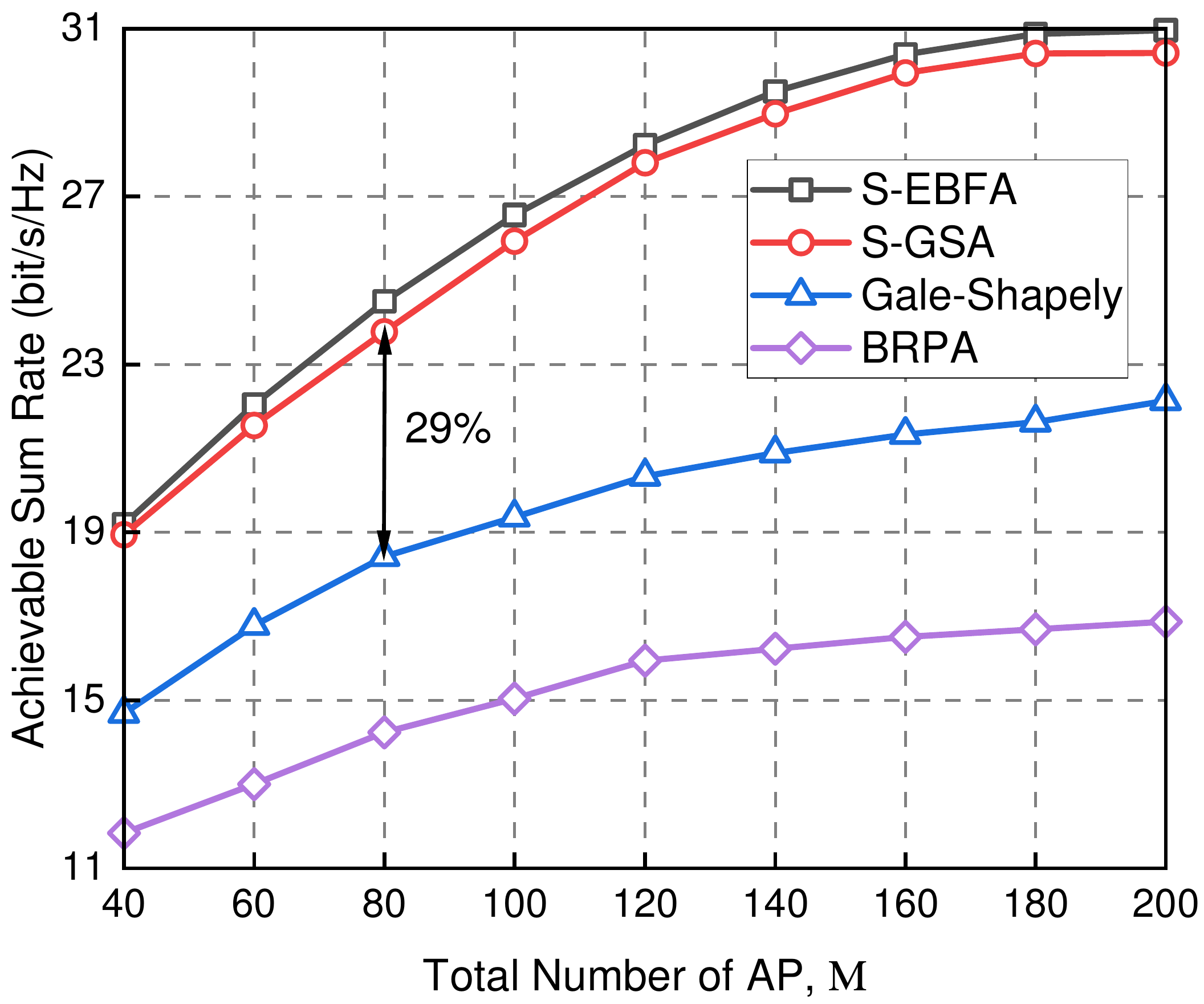}
    \caption{ASR vs. the number of APs.}
    \label{differ_APs}
\end{figure}

\begin{figure*}
	\centering
	\subfloat[]{
	\includegraphics[width=0.32\textwidth]{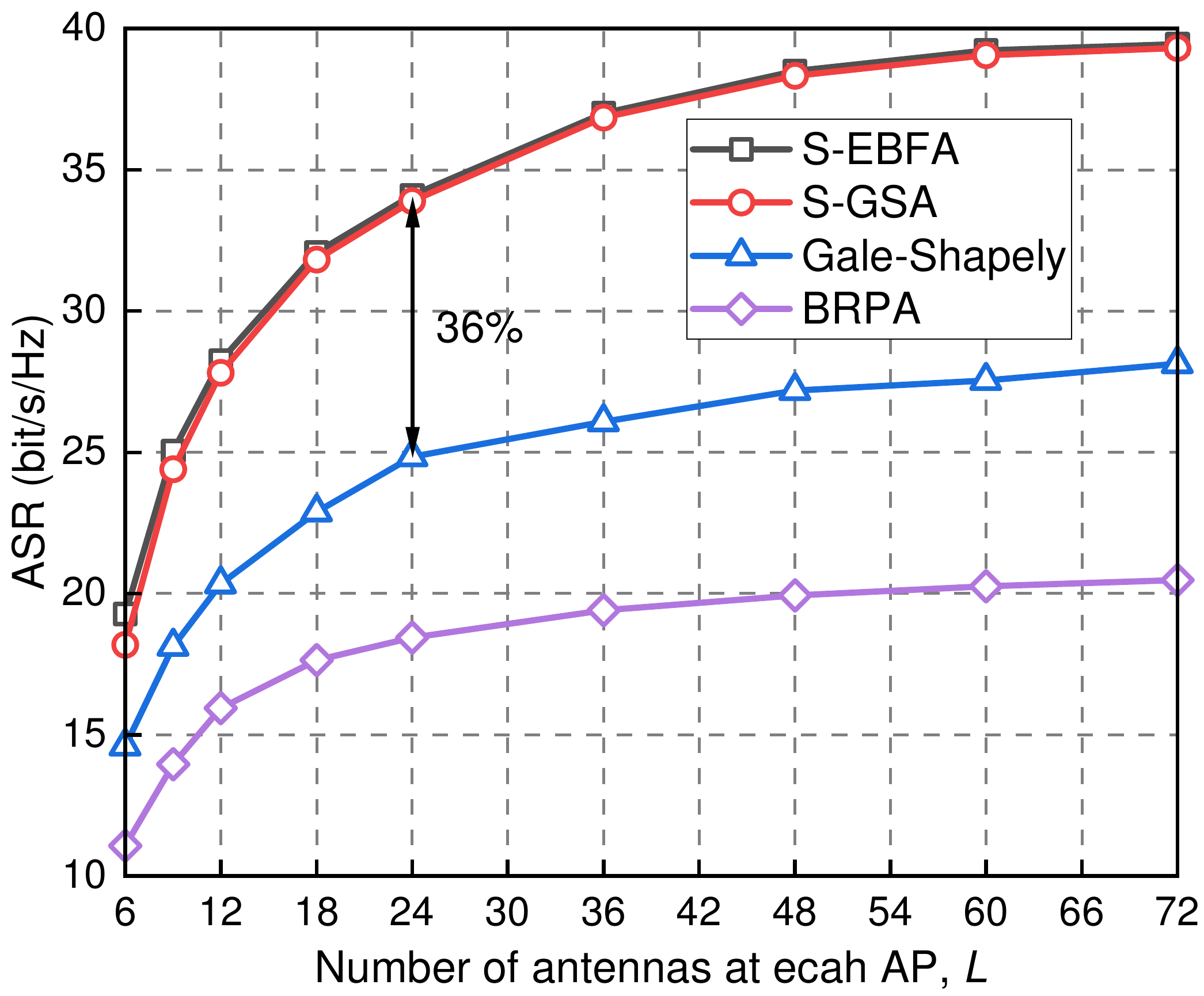}}
    \subfloat[]{
	\includegraphics[width=0.32\textwidth]{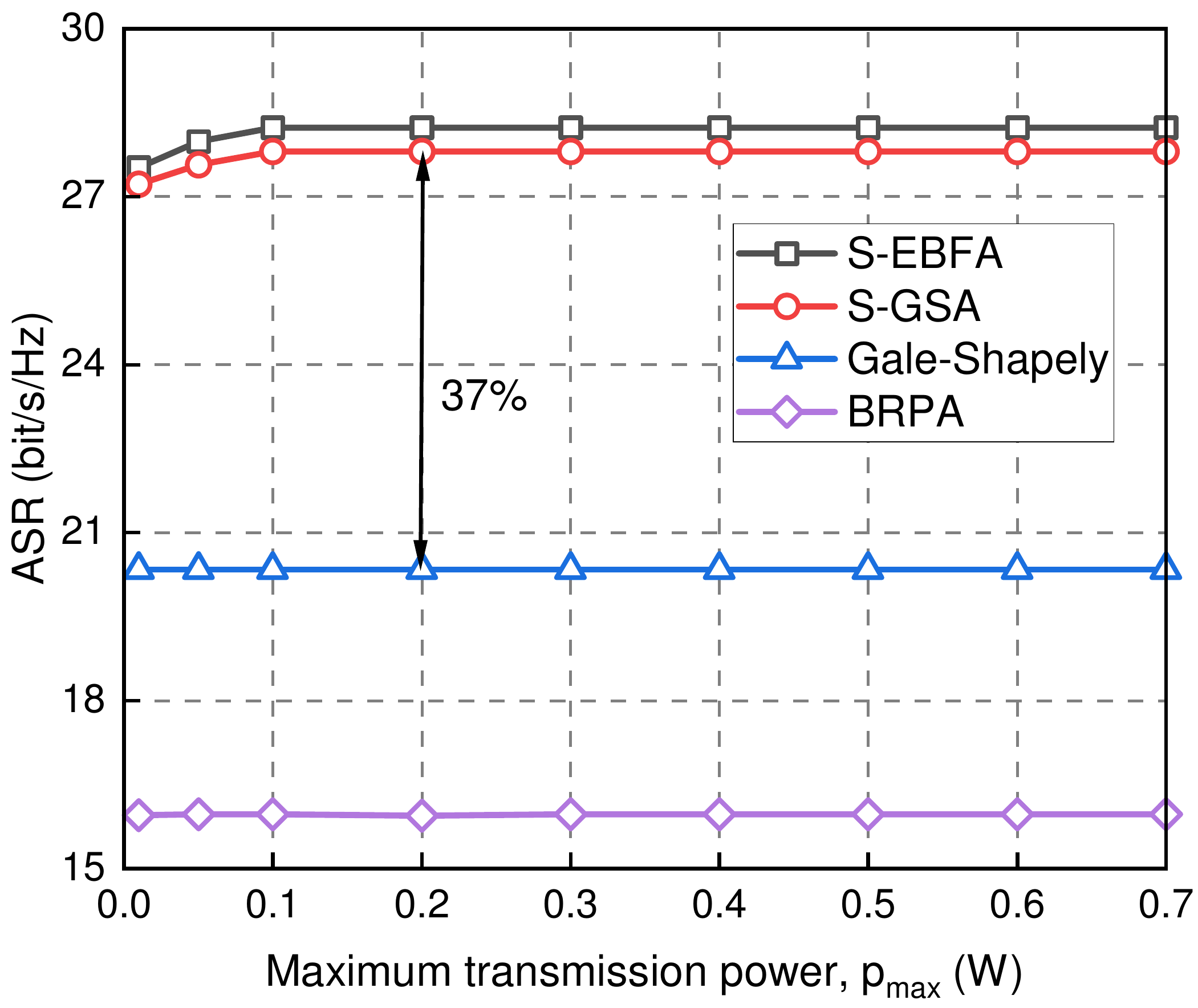}}
	\subfloat[]{
	\includegraphics[width=0.32\textwidth]{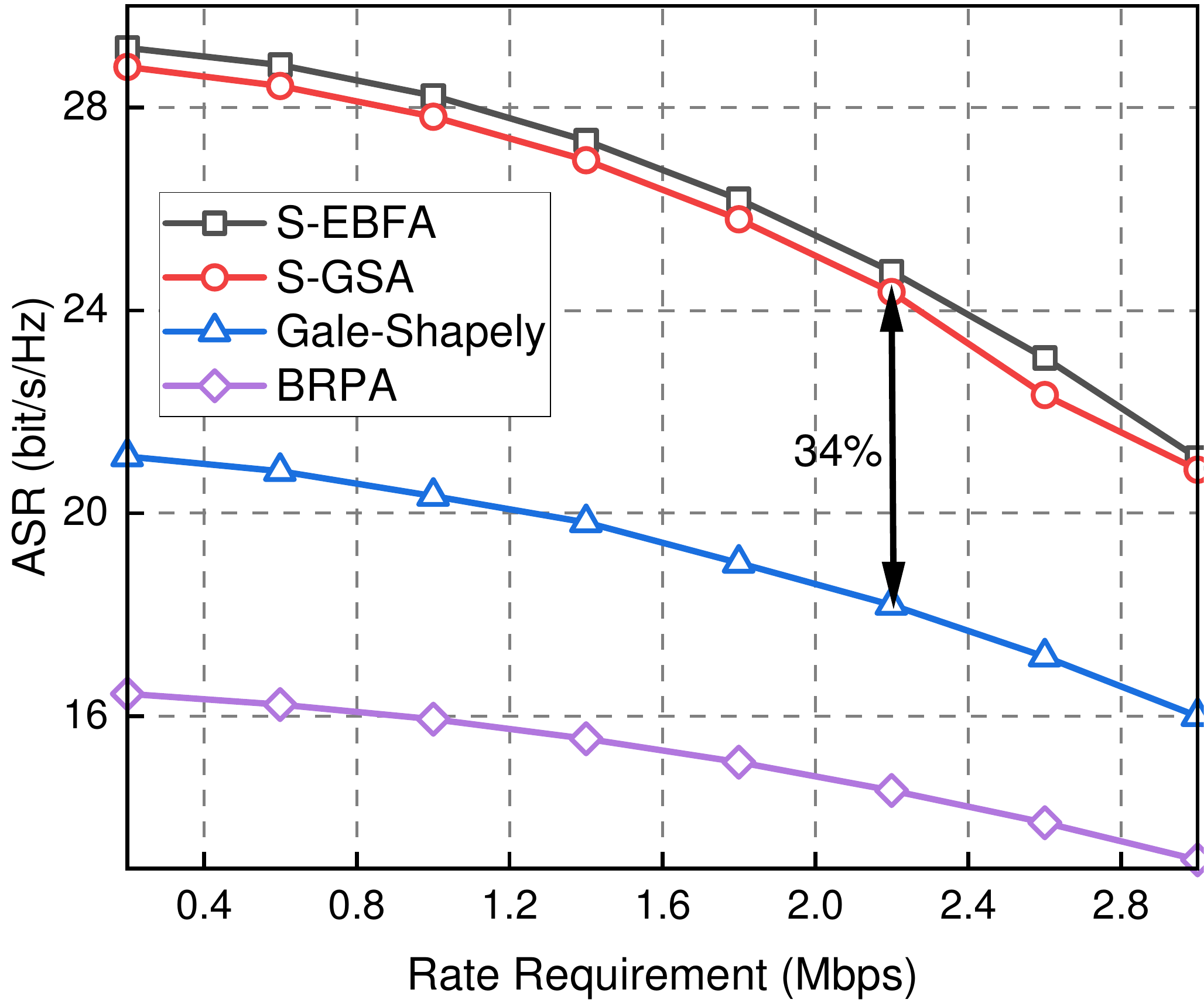}}
    \caption{Performance comparison of different algorithms under different scenarios: a. ASR vs. the number of antennas at each AP; b. ASR vs. maximum transmit power; c. ASR vs. rate requirement.}
	\label{differ}
\end{figure*}

To evaluate the accuracy of the LB derived in Theorem \ref{theorem of LB}, we conducted Monte Carlo simulations by generating random channels one million times and averaging the results. Fig. \ref{tightnessOfLBimag} illustrates that the derived LB exhibits high precision across various different numbers of APs and antennas.

We evaluate the ASR of both our proposed algorithms and benchmark algorithms under different numbers of URLLC UEs, as illustrated in Fig. \ref{differ_users}. Specifically, we consider a range of URLLC UEs from $20$ to $90$, while keeping the experimental parameter settings consistent with those outlined in Section \ref{convergence}.
We observe an initial increase in the ASR as the number of URLLC UEs increases, reaching its peak at $N = 80$. This trend can be attributed to the increase in the number of URLLC UEs. However, the ASR subsequently decreases as the number of URLLC UEs increases, which can be attributed to the increase in inter-cluster interference and pilot length with the rise of URLLC UEs.
Furthermore, the proposed algorithms outperform the benchmark algorithms across various numbers of URLLC UEs, which is attributed to the joint optimization of power allocation and UE clustering.
Specifically, at $N = 70$, algorithm S-GSA surpasses the performance of the BRPA by $73\%$ and the Gale-Shapley by $47\%$.

We evaluate the performance of the proposed algorithms and other benchmark algorithms under varying numbers of APs, as illustrated in Fig. \ref{differ_APs}.
It is evident that the ASR rises as the number of APs increases.
This can be attributed to the fact that the inter-cluster interference increases linearly with the number of APs, whereas the desired signal and intra-cluster interference increase with the square of the number of APs, resulting in a relatively weaker influence of inter-cluster interference as the number of APs grows.
When the number of AP is large, the ASR stabilizes as it is primarily determined by the desired signal and intra-cluster interference, while the inter-cluster interference becomes negligible in comparison. When the number of APs is $80$, our proposed S-GSA outperforms benchmark algorithms such as BRPA and Gale-Shapley by $67\%$ and $29\%$ respectively.

We evaluate the ASR of the proposed algorithms and compare them with benchmark algorithms for different numbers of antennas at each AP, as illustrated in Fig. \ref{differ} (a). The ASR increases with the number of antennas at the AP and it can be seen that as the number of antennas at each AP increases, the growth rate of ASR gradually slows down. When deploying the benchmark algorithms in the NOMA-aided CFmMIMO system for supporting URLLC, ASR stabilizes as the number of antennas at each AP increases, proving that the ergodic rate has an upper bound as the number of antennas per AP goes to infinity.
When the number of antennas is $L=24$, the proposed S-GSA achieves $84\%$ and $36\%$ higher performance compared to the BRPA and Gale-Shapley, respectively.

Fig. \ref{differ} (b) shows that the influence of transmit power on the ASR.
The increase in transmission power indirectly mitigates the effects of noise on the DL data transmission. However, excessive transmission power does not yield significant improvements in achievable rates. This suggests that a small and optimal DL power is adequate for minimizing the impact of noise. When the maximum transmission power is set to $p_d = 0.2\ \mathrm{W}$, S-GSA achieves a $74\%$ improvement over BRPA and a $37\%$ improvement over Gale-Shapley.

We evaluate the performance of the proposed algorithms and the benchmark algorithms under varying minimum rate requirements, as illustrated in Fig. \ref{differ} (c). With an increase in the minimum rate requirement, the ASR gradually decreases. The improvement of the proposed algorithms diminishes as the minimum rate requirement increases, and its final performance closely aligns with that of the benchmark algorithms. Moreover, the number of available schemes satisfying the minimum rate requirement decreases as the requirement becomes more stringent, resulting in a decrease in the ASR. Therefore, the performance of the proposed algorithms and the benchmark algorithms is similar. When minimum rate requirement is $2.2$ Mbps, the ASR of algorithm S-GSA demonstrates a $68\%$ improvement over BRPA and an $34\%$ improvement over Gale-Shapley.

\section{Conclusion}
In this paper, we have investigated the performance of the DL NOMA-aided CFmMIMO system for URLLC in terms of ASR, with consideration of FBC.
Specifically, we have derived the LB for the ergodic data rate, while taking into account the effects of inter-cluster pilot contamination, inter-cluster interference, and imperfect SIC. To maximize the ASR while ensuring the minimum data rate constraint, we proposed a joint optimization framework for power allocation and UE clustering.
To solve the problem efficiently, we propose a two-step iterative algorithm where SCA is exploited to transform power allocation into a series of GPs, and UE clustering is converted into finding \emph{differ-cluster negative loop} based on graph theory. We have conducted simulations, to validate our analytical and optimization work. The results have shown that the derived LB was tight and the proposed algorithms achieved significant gains in terms of ASR compared with the benchmark algorithms under various scenarios.

Our study on supporting URLLC in the NOMA-aided CFmMIMO system remains at a preliminary theoretical level and requires further investigation. The performance of the NOMA-aided CFmMIMO system in supporting URLLC is significantly influenced by precoding schemes. Therefore, designing a precoding scheme that satisfies the URLLC requirements poses considerable challenges.
Besides, the presence of a large number of UEs in a 6G network can impose a significant burden on the backhauls. Therefore, it is necessary to design a distributed NOMA-aided CFmMIMO system where multiple CPUs are deployed to ensure low load in the backhauls to better support URLLC.
\begin{appendices}
\section{Proof of Theorem 1}\label{appendix A}
\begin{figure*}
  \begin{equation}\label{appendix111}
   \begin{aligned}
      \mathbb{E}\left \{ \left | Y_{rici,n} \right |^2 \right \}&=\mathbb{E}\left \{ \left (  {\textstyle \sum_{m=1}^{M}}{\textstyle \sum_{n'=n+1}^{N}} x_{\pi_nn'} \sqrt{p_{mn'}} \left (\iota_{mnn'}q_{n'} - \mathbb{E}\left \{ \iota_{mnn'} \right \}\hat{q}_{n'} \right ) \right )^2 \right \} \\
           & = {\textstyle \sum_{n'=n+1}^{N}}L\left ( 2-2c_{n'} \right ) x_{\pi_nn'}  \left ({\textstyle \sum_{m=1}^{M}} \sqrt{p_{mn'}\theta_{mn}}  \right )^2+ {\textstyle \sum_{n'=n+1}^{N}} x_{\pi_nn'} {\textstyle \sum_{m=1}^{M}}p_{mn'}\beta_{mn}  .
   \end{aligned}
  \end{equation}
  {\noindent} \rule[-10pt]{18cm}{0.05em}
\end{figure*}
Based on $\gamma_n = \min \left ( \gamma_n^{n},\gamma_{n_1}^n \right )$ and $\bar{\gamma}_n = \mathbb{E}_{\gamma_n}^{-1}\left \{ \gamma_n^{-1} \right \}$, we have $\bar{\gamma}_n = \min(\bar{\gamma}_n^n,\bar{\gamma}_{n_1}^n)$, where $\bar{\gamma}_n^n = \mathbb{E}_{\gamma_n^n}^{-1}\left \{ \left ( \gamma_n^{n} \right )^{-1}  \right \}$ and $\bar{\gamma}_{n_1}^n = \mathbb{E}_{\gamma_{n_1}^n}^{-1}\left \{ \left ( \gamma_{n_1}^{n} \right )^{-1}  \right \}$. Then we first derive the expression of $\bar{\gamma}_n^n$.
According to the independence between estimated channel $\hat{\mathbf{h}}_{mn}$ and estimate error $\mathbf{h}_{mn} - \hat{\mathbf{h}}_{mn}$, we first derive the expression of $\left | Y_{ds,n} \right |^2$
\begin{equation}
  \begin{aligned}
    \left | Y_{ds,n} \right |^2 &= \left (   {\textstyle\sum_{m\in \mathcal{M}}}  \sqrt{p_{mn}}  \mathbb{E}\left \{  \iota_{mnn}  \right \} \right )^2 \\
                                &= L\left ( {\textstyle\sum_{m\in \mathcal{M}}}  \sqrt{p_{mn}\theta_{mn}}  \right )^2.
  \end{aligned}
\end{equation}
The expression of $\mathbb{E}\left \{ \left | Y_{bu,n} \right |^2  \right \}$ is given as
\begin{equation}
 \begin{aligned}
  &\mathbb{E}\left \{ \left | Y_{bu,n} \right |^2  \right \} \\
  &= \mathbb{E}\left \{ \left (  {\textstyle \sum_{m \in \mathcal{M} }}  \sqrt{p_{mn}}\left (\iota_{mnn} -\mathbb{E}\left \{\iota_{mnn}  \right \} \right )q_{n} \right )^2 \right \}\\
  &= {\textstyle \sum_{m \in \mathcal{M} }}p_{mn}\beta_{mn}.
 \end{aligned}
\end{equation}
$\mathbb{E}\left \{ \left | Y_{ui,n} \right |^2  \right \}$ can be calculated as
\begin{equation}
 \begin{aligned}
  \mathbb{E}\left \{ \left | Y_{ui,n} \right |^2  \right \}= \sum_{g'\ne \pi_n} \sum_{n'\in \mathcal{N}} x_{g'n'}  \sum_{m\in \mathcal{M}}p_{mn'}\beta_{mn}.
 \end{aligned}
\end{equation}
Then we derive the expression of $\mathbb{E}\left \{ \left | Y_{ici,n} \right |^2 \right \}$.
\begin{equation}
 \begin{aligned}
  &\mathbb{E}\left \{ \left | Y_{ici,n} \right |^2 \right \}   =\mathbb{E} \left \{ \left ( \sum_{m=1}^{M} \sum_{n'=0}^{n-1} x_{\pi_nn'} \sqrt{p_{n'}}\iota_{mnn'}q_{n'}  \right )^2 \right\}\\
  &= \sum_{n'=0}^{n-1} x_{\pi_nn'} \left ( \sum_{m=1}^{M}p_{mn'}\beta_{mn} + L\left (\sum_{m=1}^{M} \sqrt{p_{mn'}\theta_{mn}}  \right )^2 \right ).
 \end{aligned}
\end{equation}
The expression of $\mathbb{E}\left \{ \left | Y_{rici,n} \right |^2 \right \}$ is given in (\ref{appendix111}).
By substituting the above formulations into the expectation of (\ref{SINR of MRT}), we obtain the expression of $\hat{\gamma}_n^n$ in (\ref{bar gamma}). Then, we can obtain the expression of $\hat{\gamma}_n^{n_1}$ in a similar way.
\end{appendices}

\bibliographystyle{IEEEtran}
\bibliography{IEEEabrv,v1ref}

\begin{thebibliography}{10}
\providecommand{\url}[1]{#1}
\csname url@samestyle\endcsname
\providecommand{\newblock}{\relax}
\providecommand{\bibinfo}[2]{#2}
\providecommand{\BIBentrySTDinterwordspacing}{\spaceskip=0pt\relax}
\providecommand{\BIBentryALTinterwordstretchfactor}{4}
\providecommand{\BIBentryALTinterwordspacing}{\spaceskip=\fontdimen2\font plus
\BIBentryALTinterwordstretchfactor\fontdimen3\font minus
  \fontdimen4\font\relax}
\providecommand{\BIBforeignlanguage}[2]{{%
\expandafter\ifx\csname l@#1\endcsname\relax
\typeout{** WARNING: IEEEtran.bst: No hyphenation pattern has been}%
\typeout{** loaded for the language `#1'. Using the pattern for}%
\typeout{** the default language instead.}%
\else
\language=\csname l@#1\endcsname
\fi
#2}}
\providecommand{\BIBdecl}{\relax}
\BIBdecl

\bibitem{Bennis2018}
M.~Bennis, M.~Debbah, and H.~V. Poor, ``Ultrareliable and low-latency wireless
  communication: Tail, risk, and scale,'' \emph{Proc. {IEEE}}, vol. 106,
  no.~10, pp. 1834--1853, Oct. 2018.

\bibitem{FengUltra2021}
D.~Feng, L.~Lai, J.~Luo, Y.~Zhong, C.~Zheng, and K.~Ying, ``Ultra-reliable and
  low-latency communications: applications, opportunities and challenges,''
  \emph{Sci. China Inf. Sci.}, vol.~64, pp. 1--12, Jan. 2021.

\bibitem{Nava2020A}
J.~Navarro-Ortiz, P.~Romero-Diaz, S.~Sendra, P.~Ameigeiras, J.~J. Ramos-Munoz,
  and J.~M. Lopez-Soler, ``A survey on {5G} usage scenarios and traffic
  models,'' \emph{{IEEE} Commun. Surveys Tuts.}, vol.~22, no.~2, pp. 905--929,
  2nd Quart., 2020.

\bibitem{Popovski5G2018}
P.~Popovski, K.~F. Trillingsgaard, O.~Simeone, and G.~Durisi, ``{5G} wireless
  network slicing for {eMBB}, {URLLC}, and {mMTC}: A communication-theoretic
  view,'' \emph{{IEEE} Access}, vol.~6, pp. 55\,765--55\,779, Sep. 2018.

\bibitem{HassanKey2021}
B.~Hassan, S.~Baig, and M.~Asif, ``Key technologies for ultra-reliable and
  low-latency communication in {6G},'' \emph{IEEE Commun. Stand. Mag.}, vol.~5,
  no.~2, pp. 106--113, Jun. 2021.

\bibitem{WON2021221}
\BIBentryALTinterwordspacing
J.~W. Won and J.~M. Ahn, ``{3GPP URLLC} patent analysis,'' \emph{ICT Express},
  vol.~7, no.~2, pp. 221--228, 2021. [Online]. Available:
  \url{https://www.sciencedirect.com/science/article/pii/S2405959520302046}
\BIBentrySTDinterwordspacing

\bibitem{9416241}
J.~\"Ostman, A.~Lancho, G.~Durisi, and L.~Sanguinetti, ``{URLLC} with massive
  {MIMO}: Analysis and design at finite blocklength,'' \emph{{IEEE} Trans.
  Wireless Commun.}, vol.~20, no.~10, pp. 6387--6401, Oct. 2021.

\bibitem{RenJoint2020}
H.~Ren, C.~Pan, Y.~Deng, M.~Elkashlan, and A.~Nallanathan, ``Joint pilot and
  payload power allocation for massive-{MIMO}-enabled {URLLC} {IIoT}
  networks,'' \emph{{IEEE} J. Sel. Areas Commun.}, vol.~38, no.~5, pp.
  816--830, Mar. 2020.

\bibitem{9516890}
L.~Zhao, S.~Yang, X.~Chi, W.~Chen, and S.~Ma, ``Achieving energy-efficient
  uplink urllc with {MIMO}-aided grant-free access,'' \emph{{IEEE} Trans.
  Wireless Commun.}, vol.~21, no.~2, pp. 1407--1420, Feb. 2022.

\bibitem{9586055}
S.~Elhoushy, M.~Ibrahim, and W.~Hamouda, ``Cell-free massive {MIMO}: A
  survey,'' \emph{IEEE Commun. Surveys Tut.}, vol.~24, no.~1, pp. 492--523,
  Firstquarter 2022.

\bibitem{electronics12041001}
\BIBentryALTinterwordspacing
J.~Kassam, D.~Castanheira, A.~Silva, R.~Dinis, and A.~Gameiro, ``A review on
  cell-free massive {MIMO} systems,'' \emph{Electronics}, vol.~12, no.~4, Feb.
  2023. [Online]. Available: \url{https://www.mdpi.com/2079-9292/12/4/1001}
\BIBentrySTDinterwordspacing

\bibitem{8761828}
G.~Interdonato, P.~Frenger, and E.~G. Larsson, ``Scalability aspects of
  cell-free massive {MIMO},'' in \emph{Proc. IEEE Conf. Commun.}, 2019, pp.
  1--6.

\bibitem{9650567}
H.~A. Ammar, R.~Adve, S.~Shahbazpanahi, G.~Boudreau, and K.~V. Srinivas,
  ``User-centric cell-free massive {MIMO} networks: A survey of opportunities,
  challenges and solutions,'' \emph{IEEE Commun. Surveys Tut.}, vol.~24, no.~1,
  pp. 611--652, Firstquarter 2022.

\bibitem{8768014}
J.~Zhang, S.~Chen, Y.~Lin, J.~Zheng, B.~Ai, and L.~Hanzo, ``Cell-free massive
  {MIMO}: A new next-generation paradigm,'' \emph{IEEE Access}, vol.~7, pp.
  99\,878--99\,888, Jul. 2019.

\bibitem{Li2018NOMA}
Y.~Li and G.~A. Aruma~Baduge, ``{NOMA}-aided cell-free massive {MIMO}
  systems,'' \emph{{IEEE} Wireless Commun. Lett.}, vol.~7, no.~6, pp. 950--953,
  Dec. 2018.

\bibitem{Nguyen2020Max}
T.~K. Nguyen, H.~H. Nguyen, and H.~D. Tuan, ``Max-min {QoS} power control in
  generalized cell-free massive {MIMO-NOMA} with optimal backhaul combining,''
  \emph{{IEEE} Trans. Veh. Technol.}, vol.~69, no.~10, pp. 10\,949--10\,964,
  Oct. 2020.

\bibitem{Zhang2020NOMA}
J.~Zhang, J.~Fan, B.~Ai, and D.~W.~K. Ng, ``{NOMA}-based cell-free massive
  {MIMO} over spatially correlated rician fading channels,'' in \emph{Proc.
  IEEE Int. Conf. Commun. (ICC)}, Jun. 2020, pp. 1--6.

\bibitem{Ohashi2021}
A.~A. Ohashi, D.~B.~d. Costa, A.~L.~P. Fernandes, W.~Monteiro, R.~Failache,
  A.~M. Cavalcante, and J.~C. W.~A. Costa, ``Cell-free massive {MIMO-NOMA}
  systems with imperfect {SIC} and non-reciprocal channels,'' \emph{{IEEE}
  Wireless Commun. Lett.}, vol.~10, no.~6, pp. 1329--1333, Jun. 2021.

\bibitem{Rezaei2020}
F.~Rezaei, C.~Tellambura, A.~A. Tadaion, and A.~R. Heidarpour, ``Rate analysis
  of cell-free massive {MIMO-NOMA} with three linear precoders,'' \emph{{IEEE}
  Trans. Commun.}, vol.~68, no.~6, pp. 3480--3494, Jun. 2020.

\bibitem{Kusaladharma2019Achievable}
S.~Kusaladharma, W.~P. Zhu, W.~Ajib, and G.~Amarasuriya, ``Achievable rate
  analysis of {NOMA} in cell-free massive {MIMO}: A stochastic geometry
  approach,'' in \emph{Proc. IEEE Int. Conf. Commun. (ICC)}, May 2019, pp.
  1--6.

\bibitem{Kusaladharma2021Achievable}
S.~Kusaladharma, W.-P. Zhu, W.~Ajib, and G.~A.~A. Baduge, ``Achievable rate
  characterization of {NOMA}-aided cell-free massive {MIMO} with imperfect
  successive interference cancellation,'' \emph{{IEEE} Trans. Commun.},
  vol.~69, no.~5, pp. 3054--3066, May 2021.

\bibitem{Zhang2022Performance}
J.~Zhang, J.~Fan, J.~Zhang, D.~W.~K. Ng, Q.~Sun, and B.~Ai, ``Performance
  analysis and optimization of {NOMA}-based cell-free massive {MIMO} for
  {IoT},'' \emph{{IEEE} Internet Things J.}, vol.~9, no.~12, pp. 9625--9639,
  Jun. 2022.

\bibitem{Zhang2019Spectral}
Y.~Zhang, H.~Cao, M.~Zhou, and L.~Yang, ``Spectral efficiency maximization for
  uplink cell-free massive {MIMO-NOMA} networks,'' in \emph{Proc. IEEE Int.
  Conf. Commun. Workshops (ICC Workshops)}, May 2019, pp. 1--6.

\bibitem{Le2021Learning}
Q.~N. Le, V.-D. Nguyen, O.~A. Dobre, N.-P. Nguyen, R.~Zhao, and S.~Chatzinotas,
  ``Learning-assisted user clustering in cell-free massive {MIMO-NOMA}
  networks,'' \emph{{IEEE} Trans. Veh. Technol.}, vol.~70, no.~12, pp.
  12\,872--12\,887, Dec. 2021.

\bibitem{BasharOn2020}
M.~Bashar, K.~Cumanan, A.~G. Burr, H.~Q. Ngo, L.~Hanzo, and P.~Xiao, ``On the
  performance of cell-free massive {MIMO} relying on adaptive {NOMA/OMA}
  mode-switching,'' \emph{{IEEE} Trans. Commun.}, vol.~68, no.~2, pp. 792--810,
  Feb. 2020.

\bibitem{Dang2022Optimal}
X.-T. Dang, M.~T.~P. Le, H.~V. Nguyen, S.~Chatzinotas, and O.-S. Shin,
  ``Optimal user pairing approach for {NOMA}-based cell-free massive {MIMO}
  systems,'' \emph{{IEEE} Trans. Veh. Technol.}, pp. 1--15, Apr. 2022.

\bibitem{Polyanskiy2010}
Y.~Polyanskiy, H.~V. Poor, and S.~Verdu, ``Channel coding rate in the finite
  blocklength regime,'' \emph{{IEEE} Trans. Inf. Theory}, vol.~56, no.~5, pp.
  2307--2359, May. 2010.

\bibitem{ZhaoQueue2022}
X.~Zhao, W.~Chen, and H.~V. Poor, ``Queue-aware finite-blocklength coding for
  ultra-reliable and low-latency communications: A cross-layer approach,''
  \emph{{IEEE} Trans. Wireless Commun.}, vol.~21, no.~10, pp. 8786--8802, Oct.
  2022.

\bibitem{Peng2023}
Q.~Peng, H.~Ren, C.~Pan, N.~Liu, and M.~Elkashlan, ``Resource allocation for
  uplink cell-free massive {MIMO} enabled {URLLC} in a smart factory,''
  \emph{{IEEE} Trans. Commun.}, vol.~71, no.~1, pp. 553--568, Jan. 2023.

\bibitem{PengResourceJul}
Q.~Peng, H.~Ren, M.~Dong, M.~Elkashlan, K.-K. Wong, and L.~Hanzo, ``Resource
  allocation for cell-free massive {MIMO}-aided {URLLC} systems relying on
  pilot sharing,'' \emph{{IEEE} J. Sel. Areas Commun.}, vol.~41, no.~7, pp.
  2193--2207, Jul. 2023.

\bibitem{10138403}
A.~O. Kislal, A.~Lancho, G.~Durisi, and E.~G. Ström, ``Efficient evaluation of
  the error probability for pilot-assisted {URLLC} with massive {MIMO},''
  \emph{{IEEE} J. Sel. Areas Commun.}, vol.~41, no.~7, pp. 1969--1981, Jul.
  2023.

\bibitem{ZhangSWIPT2023}
R.~Zhang, K.~Xiong, Y.~Lu, D.~W.~K. Ng, P.~Fan, and K.~B. Letaief,
  ``{SWIPT}-enabled cell-free massive {MIMO-NOMA} networks: A machine
  learning-based approach,'' \emph{{IEEE} Trans. Wireless Commun.}, pp. 1--1,
  Nov. 2023.

\bibitem{10041787}
Q.~Peng, H.~Ren, C.~Pan, N.~Liu, and M.~Elkashlan, ``Resource allocation for
  cell-free massive {MIMO}-enabled {URLLC} downlink systems,'' \emph{{IEEE}
  Trans. Veh. Technol.}, vol.~72, no.~6, pp. 7669--7684, Jun. 2023.

\bibitem{grant2009cvx}
M.~Grant and S.~Boyd, ``{CVX}:{ Matlab} software for disciplined convex
  programmingg, version 2.1 (2014).'' [Online]. Available: http://cvxr.com/cvx.

\bibitem{Guo2019Interference}
F.~Guo, H.~Lu, D.~Zhu, and H.~Wu, ``Interference-aware user grouping strategy
  in {NOMA} systems with {QoS} constraints,'' in \emph{Proc. IEEE Conf. Comput.
  Commun.}, May. 2019, pp. 1378--1386.

\bibitem{GuoJoint2022}
F.~Guo, H.~Lu, and Z.~Gu, ``Joint power and user grouping optimization in
  cell-free massive {MIMO} systems,'' \emph{{IEEE} Trans. Wireless Commun.},
  vol.~21, no.~2, pp. 991--1006, Feb. 2022.

\bibitem{Ngo2017Cell}
H.~Q. Ngo, A.~Ashikhmin, H.~Yang, E.~G. Larsson, and T.~L. Marzetta,
  ``Cell-free massive {MIMO} versus small cells,'' \emph{{IEEE} Trans. Wireless
  Commun.}, vol.~16, no.~3, pp. 1834--1850, Mar. 2017.

\bibitem{Xu2018Joint}
W.~Xu, X.~Li, C.-H. Lee, M.~Pan, and Z.~Feng, ``Joint sensing duration
  adaptation, user matching, and power allocation for cognitive {OFDM-NOMA}
  systems,'' \emph{{IEEE} Trans. Wireless Commun.}, vol.~17, no.~2, pp.
  1269--1282, Feb. 2018.

\end{thebibliography}
\end{document}